\documentclass[journal]{IEEEtran}

\usepackage{amsmath}
\usepackage{amssymb}
\usepackage{algorithm}
\usepackage{algorithmicx}
\usepackage{algpseudocode}
\usepackage{subfigure}
\usepackage{acronym}
\usepackage{todonotes}
\usepackage{hyperref}
\usepackage{hhline}
\usepackage{paralist} 
\usepackage[flushleft]{threeparttable}
\usepackage{amsthm}

\newtheorem{proposition}{Proposition}

\makeatletter
\renewcommand{\ALG@beginalgorithmic}{\scriptsize}
\makeatother

\algtext*{EndLoop}
\algtext*{EndFor}
\algtext*{EndIf}
\algtext*{EndWhile}

\hyphenation{op-tical net-works semi-conduc-tor}

\newcommand{\name}{BEH}
\newcommand{\mono}{MONO}
\newcommand{\charbe}{ESA}
\newcommand{\anbe}{EHA}

\acrodef{AC}{Alternating Current}
\acrodef{ESS}{energy storage system}
\acrodef{RF}{Radio Frequency}
\acrodef{RFID}{Radio Frequency Identification}
\acrodef{USB}{Universal Serial Bus}
\acrodef{WATS}{Wireless autonomous transducer systems}
\acrodef{WiFi}{wireless local area network}
\acrodef{WPT}{Wireless Power Transfer}
\acrodef{BLE}{Bluetooth Low Energy}
\acrodef{FEC}{Forward Error Correcting}
\acrodef{EIRP}{Effective Isotropic Radiated Power}
\acrodef{RSSI}{Received Signal Strength Indicator}
\acrodef{WFI}{Wait For Interrupt}
\acrodef{PRR}{Packet Reception Rate}

\begin{document}

\title{{\name}: Indoor Batteryless BLE Beacons using RF Energy Harvesting for Internet of Things}
\author{
\IEEEauthorblockN{$\text{Qingzhi Liu}^{*\dagger}$, $\text{Wieger IJntema}^{*}$, $\text{Anass Drif}^{*}$, $\text{Przemys{\l}aw Pawe{\l}czak}^{*}$, and $\text{Marco Zuniga}^{*}$} \\
\IEEEauthorblockA{${}^{*}$EEMCS, Delft University of Technology, The Netherlands\\ ${}^{\dagger}$INF, Wageningen University, The Netherlands\\Email: qingzhi.liu@wur.nl, \{p.pawelczak, m.a.zunigazamalloa\}@tudelft.nl, \{w.ijntema, a.drif\}@student.tudelft.nl}
}

\maketitle

\begin{abstract}
Indoor beacon system is a cornerstone of internet of things. 
Until now, most of the research effort has focused on achieving various applications based on beacon communication. 
However, the power consumption on beacon devices becomes the bottleneck for the large-scale deployments of indoor networks. 
On one hand, the size of beacon devices is required to be small enough for easy deployment, which further limits the size of battery. 
On the other hand, replacing the batteries of the beacon devices could lead to high maintenance costs. 
It is important to provide them with full energy autonomy. 
To tackle this problem we propose \name: an indoor beacon system aimed at operating perpetually without batteries. 
Our contributions are twofold. Firstly, we propose four methods to increase the utilization efficiency of harvested RF energy in the beacon system, by which the energy consumption level becomes low enough to fit within the energy harvesting budget. 
Secondly, we implement {\name} using Bluetooth Low Energy (BLE) and Radio Frequency (RF) energy harvesting devices, and test {\name} extensively in a laboratory environment. Our test results show that {\name} can enable perpetual lifetime operation of mobile and static BLE beacon devices in a (16\,m$^{\text{2}}$) room. The average value of packet reception rate achieves more than 99\% in the best case. 
\end{abstract}

\section{Introduction}
\label{sec:introduction}

Beacon communication is widely adopted as one of the key technologies for Internet of Things (IoT) applications requiring low energy consumption and low data rate. 
Among various beacon schemes, Bluetooth Low Energy (BLE) beacon is one of the most promising systems for indoor applications. 
Its main advantages are low power consumption and simple implementation. 
Integrating BLE beacon into IoT devices promotes various applications, including localization~\cite{zhuang2016smartphone}, proximity detection~\cite{faragher2014analysis}, activity sensing~\cite{cheraghi2017guidebeacon}, smart offices~\cite{collotta2015novel}, etc. 

Along with the large-scale deployment, 
the maintainability in the life cycle of BLE beacon system raises as an important evaluation index, including deploying, replacing battery, repairing, etc. 
In some large-scale scenarios, the cost for replacing the batteries of thousands of beacon nodes would be too high. 
For example, Amsterdam Schiphol Airport has roughly 10000 BLE beacons deployed to provide navigation services. 
In some other scenarios, such as smart infrastructures with beacon devices embedded in the materials themselves, battery replacement would be even impossible. 

The research community has recognized this energy autonomous challenge, 
meanwhile the area of Energy Harvesting (EH) starts to gain momentum~\cite{xie:2012:wcm}\cite{xiao2014wireless}. 
To mitigate the power supply issue, various energy harvesting techniques are integrated with IoT devices. 
However, most existing energy harvesting solutions rely on specific application scenarios. 
For example, 
solar panel is mainly used for harvesting outdoor sunlight or indoor lighting. 
Kinetic energy harvesting targets on mobile users or objects. 
These energy harvesting systems cannot supply continuous harvested energy for perpetual operation of devices. 
We need an energy harvesting solution for BLE beacon applications, 
which can (i) be deployed for indoor area; (ii) operate 24 hours without interrupting the daily life of users; (iii) supply enough energy for periodical BLE beacon communication. 
Among the existing EH techniques~\cite{xie:2012:wcm}, we chose the one based on radio frequency (RF). 

In this paper, we build an indoor batteryless BLE beacon system using RF energy harvesting named \textbf{{\name}}. 
Dedicated RF sources are deployed to radiate radio energy to the neighbor BLE beacon devices. 
The main aim of {\name} is that the static and mobile beacon devices rely on harvested RF energy to achieve beacon communication and proximity detection. 
Specifically, the mobile device initiates the beacon procedure by broadcasting a beacon request message to the neighbor static BLE beacon nodes. 
The static beacon nodes send beacon reply messages back to the mobile devices after receiving the request. 
We require: 
(i) the mobile device can successfully receive the reply messages from static beacon nodes, and correctly decode the packet information. 
(ii) the mobile device localizes itself to the proximity position of the beacon node sending the beacon reply with the strongest signal strength. 

To achieve these aims, we make four innovations in the system design and implementation. 
Firstly, we propose a \emph{collision based beacon} approach based on capture effect and orthogonal code. The approach minimizes the power consumption of receiving beacon messages in mobile devices. 
Secondly, RF-based \emph{passive wake-up} is utilized and optimized to decrease the power consumption of idle listening and unnecessary wake-up in static beacon nodes. 
Thirdly, we make RF energy based \emph{proximity range estimation} to mobile nodes, and wakes up the static beacon nodes which are nearby the mobile device to send beacon reply messages. The approach avoids unnecessary wake-up and beacon communication in static beacon nodes. 
Fourthly, we propose a \emph{two-wave beacons} approach to cope with the unbalanced harvested energy in static beacon nodes. The approach increases the energy utilization efficiency of static beacon nodes that harvest lower energy. 

{\name} system can be used for sending information, such as sensing data, from static beacon nodes to mobile devices. 
The mobile devices can estimate its proximity position while receiving beacon information. 
We evaluate the performance of {\name} by packet reception rate (PRR) and proximity detection accuracy (PDA). 
We aim to achieve high PRR, while relax the requirement on PDA. We require PDA can provide mobile devices at least room level accuracy. The trade-off between PRR and PDA is not the aim of {\name}. 

To summarize, we make the following contributions. 

\begin{enumerate}
	\item To the best of our knowledge, {\name} is the first IoT system that successfully achieves bidirectional BLE beacon communication and proximity detection using the RF based energy harvesting. 
	\item We design and implement {\name} based on off-the-shelf BLE module~\cite{nrf51822}, RF energy transmitters and RF energy harvesters~\cite{tx91501_p2110}, which makes the fast adoption and deployment possible.
	\item  We systematically evaluate the system in an office environment. As a result, {\name} achieves perpetual beacon communication and proximity detection on batteryless devices. 
Some experimental results are shown in Tab.\ref{tab:test_results_example}. 
\end{enumerate}

\begin{table}
\center
\caption{Selected Experimental Results of BLE Beacon Applications based on RF Energy Harvesting}
\begin{threeparttable}
\begin{tabular}{l|l}
\hline
\textbf{Property} & \textbf{Test Results} \\ \hline \hline
Min. beacon period & 1 s \\ \hline
Max. density of static beacon nodes & 0.25 node per m$^2$ \\ \hline
Max. range from energy source to beacon devices & 4 m \\ \hline
Avg. PRR (0.0625 static beacon node per m$^{\text{2}}$) & $\approx 90\%$ \\ \hline
Avg. PDA (0.0625 static beacon node per m$^{\text{2}}$) & $\approx 94\%$ \\ \hline
Avg. PRR (0.25 static beacon node per m$^{\text{2}}$) & $\approx 99\%$ \\ \hline
Avg. PDA (0.25 static beacon node per m$^{\text{2}}$) & $\approx 71\%$ \\ \hline 
\end{tabular}
\begin{tablenotes}
\scriptsize \item $\S$ Tx Power of BLE is 0 dBm. Tx Power of RF energy transmitter is 3 W(EIRP). The antenna of RF energy harvester is vertical polarized patch antenna with 6.1\,dBi gain. 
\end{tablenotes}
\label{tab:test_results_example}
\end{threeparttable}
\end{table}

The rest of the paper is organized as follows. 
The background of RF based energy harvesting is introduced in Sec.~\ref{sec:background}.
The related work is discussed in Sec.~\ref{sec:realted_work}. 
The problem and solution is discussed in Sec.~\ref{sec:problem_statement}. 
The system model, including the hardware definition, operation procedure and evaluation metrics are listed in Sec.~\ref{sec:system_model}. 
In Sec.~\ref{sec:key_tech}, 
we explain the four key components of our system design in detail. 
We present our implementation of software and hardware in Sec.~\ref{sec:system_implementation}. 
Experimental results and analysis are presented in Sec.~\ref{sec:results}. 
Finally, we present the future work and our conclusions in Sec.~\ref{sec:discussion} and Sec.~\ref{sec:conclusion}.

\section{Background of RF Energy Harvesting}
\label{sec:background}

\begin{figure}
\centering
\includegraphics[width=0.9\columnwidth]{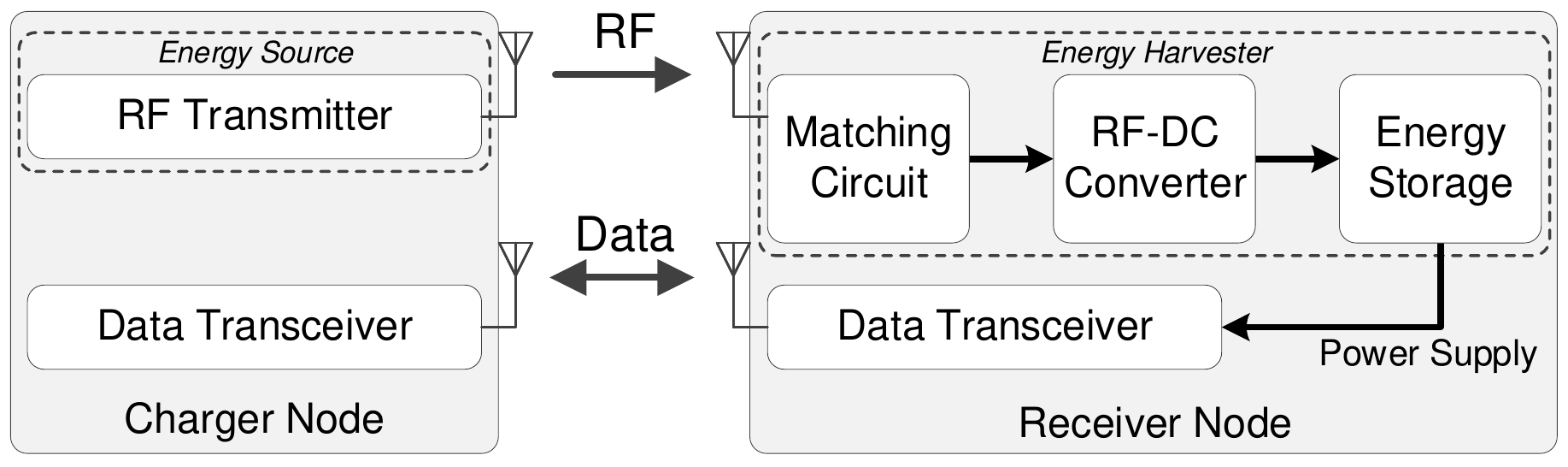}
\caption{The architecture of RF energy harvesting and BLE data communication in {\name}.} 
\label{pic:wpt_process}
\end{figure}

\begin{table}
\center
\caption{Pin functional description of P2110 $–$ 915 MHz RF Powerharvester${}^{\text{TM}}$ Receiver.}
\begin{threeparttable}
\begin{tabular}{ l | l }
\hline 
\textbf{Pin Label} & \textbf{Function} \\ \hline \hline
$\text{V}_{\text{CAP}}$ & Voltage of harvested RF energy after RF-DC converter. \\ \hline
$\text{D}_{\text{OUT}}$ & Analog voltage level corresponding to harvested power. \\ \hline
$\text{V}_{\text{OUT}}$ & DC Output of harvested power. \\ \hline
$\text{D}_{\text{SET}}$ & Set to enable measuring harvested power. \\ \hline
$\text{V}_{\text{THR}}$ & Threshold value of $\text{V}_{\text{CAP}}$ for charging and discharging. \\ \hline
\end{tabular}
\label{tab:powercast_pin}
\end{threeparttable}
\end{table}

\begin{figure}
\centering
\includegraphics[width=0.7\columnwidth]{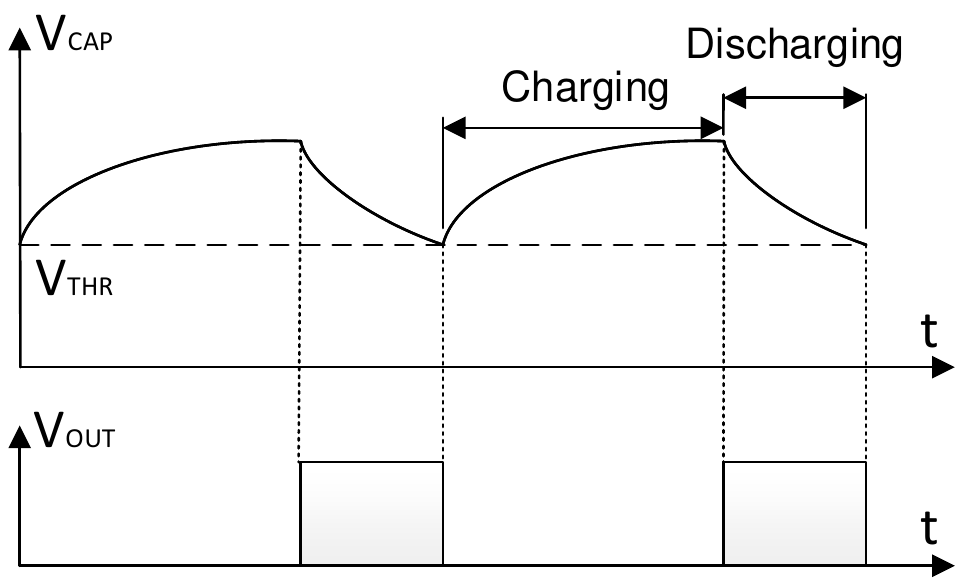}
\caption{The charging and discharging process of energy harvester. $\text{V}_{\text{OUT}}$ is the voltage supply for BLE beacon device.} 
\label{pic:powercast_charging_process}
\end{figure}

The basic architecture of RF energy harvesting and data communication used in {\name} is shown in Fig.\ref{pic:wpt_process}. The data communication channel is in parallel with RF energy harvesting. The energy used for the data transceiver in receiver node is from energy harvesting. The
core component in the energy harvester is RF-DC converter. It is used to rectify radio wave into directional current. Generally, it consists of band-pass filter, rectifier, and low-pass filter~\cite{jabbar2010rf}. 
The energy harvester boosts the voltage of the harvested power to the required level, and stores in capacitor for power supply of data transceiver. 

We utilize P2110 Powerharvester of Powercast~\cite{tx91501_p2110} as the RF energy harvester. 
There are five pins of P2110 used for {\name} as shown in Tab.~\ref{tab:powercast_pin}. 
When the voltage of harvested power from $\text{V}_{\text{CAP}}$ is above the threshold $\text{V}_{\text{THR}}$ in the capacitor, the voltage is boosted and the voltage output $\text{V}_{\text{OUT}}$ is enabled. 
When $\text{V}_{\text{CAP}}$ declines below the threshold $\text{V}_{\text{THR}}$, $\text{V}_{\text{OUT}}$ is turned off. 
A brief time diagram of the charging and discharging process is shown in Fig.\ref{pic:powercast_charging_process}. 
The time duration and frequency when $\text{V}_{\text{OUT}}$ has voltage output depends on the speed of charging and discharging. 
$\text{D}_{\text{OUT}}$ samples the analog voltage signal of harvested power to provide an indication of the amount of energy being harvested. 

The harvested RF energy is limited and easily affected by environmental factors, such as interference, obstacles, etc~\cite{liu2016safe}. 
To the best of our knowledge, there is not simulation model, which can accurately simulate all the factors affecting RF energy harvesting. 
Therefore, we implement {\name} in hardware and test in real deployment fields. 

\section{Related Work}
\label{sec:realted_work}

IoT system is typically used for industry automation, healthcare, surveillance, smart cities, etc. 
One of the bottlenecks for large-scale deployment is supplying adequate energy to IoT devices for long or even perpetual lifetime. 
There is much research about energy utilization efficiency and energy autonomous for IoT systems. 

\subsection{RF-based Energy Harvesting}

To mitigate the power supply issues for the large-scale deployment of IoT devices, 
energy autonomous for IoT attracts much research work~\cite{kamalinejad2015wireless}. 
In the research field of energy-autonomous systems, 
many kinds of techniques are explored as energy sources, including solar panels~\cite{raghunathan2005design}, wind~\cite{kansal2007power}, vibration~\cite{beeby2006energy}, RF radio~\cite{liu2016green}, magnetic resonance~\cite{kurs2007wireless}, ultra-sound~\cite{ubeam_website}, etc. 

RF-based energy harvesting systems can be classified into subcategories~\cite{xiao2014wireless}, including omni-directional or directional energy harvesting antenna, single or multi hop wireless charging~\cite{ju2014user}, static or mobile charger networks~\cite{peng2010prolonging}, parallel or simultaneous wireless information and power transfer~\cite{in2016blisp}~\cite{park2014joint}, ambient or dedicated RF energy harvesting~\cite{liu2016green}, etc. 

To implement an RF-based energy harvesting system, the first and most important factor is the type of RF energy source: ambient or dedicated RF energy harvesting. 
In the case of ambient RF energy source, the devices harvest ambient RF energy. The source of RF energy is not deployed purposely for energy harvesting, such as TV tower, wifi router, mobile base station, etc. 
In the case of dedicated RF energy harvesting, dedicated RF sources are deployed to transmit radio energy to the area where IoT devices require wireless energy~\cite{liu2016green}. 

The advantage of ambient RF energy harvesting is that there is no need to deploy specific energy sources; therefore, the deployment range of the energy harvesting devices is large. 
Compared with ambient RF energy harvesting, dedicated RF energy harvesting allows nearby devices to harvest more power. Because the energy source is dedicated to transmit RF power, and the radio properties of RF energy sources can be tuned to adapt to the energy requirements of energy harvesters, such as Tx power, frequency, etc. 
However, the disadvantage is that energy harvesting devices must be kept inside the effective power transmission range of the energy sources, which limits the mobility range of energy harvesting devices. 
In addition, the setup of RF energy sources must guarantee that the available RF power beamed on the mobile users is below the regulated safety bound~\cite{liu2016safe}. 
Based on our requirements on the BLE beacon applications, RF-based energy harvesting with dedicated RF energy sources is the most suitable choice. 

\subsection{BLE Beacon Applications}

BLE system is widely adopted for resource constrained IoT applications, such as indoor localization and tracking~\cite{zhuang2016smartphone}, proximity detection~\cite{faragher2014analysis}, activity sensing~\cite{cheraghi2017guidebeacon}, smart offices~\cite{collotta2015novel}, and so on. 
One of the main reasons is that BLE beacon communication consumes low energy and is simple to implement~\cite{jeon2018ble}. 
Therefore, BLE beacon becomes an ideal communication protocol to use energy harvesting. 
Much research combines BLE beacon based applications with energy harvesting techniques. 
\cite{nasiri2009indoor} analyzes the systems that harvest lighting energy for indoor energy constraint applications. 
The analysis is based on various design choices,  including photovoltaic cell, power conditioning circuit, energy storage, etc. 
\cite{wang2010design} designs and implements an indoor wireless sensor system using lighting-based energy harvesting. 
It proposes maximum power point tracking circuit, which increases the power conversion efficiency in solar cells.
The lifetime of the system largely increases from less than 6 months to more than 10 years. 
\cite{shih2016batteryless} proposes a BLE based hardware design that operates entirely on harvested energy. 
Its energy mainly relies on dual ISM-band RF sources and partially on photovoltaic energy harvesting. 
The system achieves periodical beacon advertising with interval of 45s. 

\subsection{Localization with RFID}

Radio-frequency identification (RFID) tags harvest RF energy from the electromagnetic field of nearby RFID readers. 
It is widely used to automatically identify and track tags. 
Localization based on RFID technology is developing rapidly in recent years. 
These approaches can be classified into three categories~\cite{ni2011rfid}.
In the first category, RFID reader-based localization system, e.g.~\cite{zhu2014fault, saab2011standalone}, allocates RFID reader in the object requiring localization to detect the pre-deployed anchor tags nearby. 
Although the deployment and maintenance cost of RFID tags are low, the localization lifetime depends on the limited battery of mobile readers.
In the second category, RFID tag-based localization system, e.g. \cite{yang2014tagoram, ni2004landmarc}, tracks the location of object attached with RFID tag by pre-deployed readers. 
The advantage of this approach is that the lifetime of tags is unlimited. 
However, the localization range is limited by the density of readers. 
Also, the tags cannot be localized once they are outside the range of readers. 
In the last category, RFID device-free localization system, e.g. \cite{liu2012mining}, detects the position of target wearing no additional localization devices. 
The idea is to find the target location by detecting and comparing the change of RFID signal in the environment. 

\subsection{Computational RFID (C-RFID)}

C-RFID is a RFID system with computational unit on RFID tag. 
It is applied in various wireless sensor applications~\cite{4519381}~\cite{1401841}. 
Wireless Identification and Sensing Platform (WISP)~\cite{4539485} is the most mature system among all the C-RFID research.
It uses the industrial standard EPCglobal Class 1 Generation 2 (EPC C1G2) RFID protocol. 
The main drawback of C-RFID is its intermittent power supply~\cite{zhang2012blink}, which drags down the computing and communication performance of the entire system. 

\begin{table}
\center
\caption{Comparison between BLE, RFID, C-RFID, and {\name}.}
\begin{tabular}{| l | l | l | l | l |}
\hline 
\textbf{Property} & \textbf{BLE} & \textbf{RFID} & \textbf{C-RFID} & \textbf{BEH} \\ \hline \hline
Data Processing & Yes & No & Yes & Yes \\ \hline
Battery & Yes & No & No & No \\ \hline
Energy Harvesting Source & No & Yes & Yes & Yes \\ \hline
Backscatter Radio & No & Yes & Yes & No \\ \hline
High Bound of Data Rate & High & Low & Low & High \\ \hline
\end{tabular}
\label{tab:comparison}
\end{table}

\subsection{Comparison}

We compare the properties between BLE, RFID, C-RFID, and our {\name} as shown in Tab.\ref{tab:comparison}. 
Compared with BLE, {\name} achieves BLE beacon operation without battery. 
The shortcoming of {\name} is mainly twofold. Firstly, {\name} system requires RF energy transmitters are deployed around the {\name} beacon devices, which limits the scalability of deployment. 
Secondly, the communication speed of {\name} depends on the amount of harvested energy. 
C-RFID and {\name} both have data processing unit, batteryless structure, and energy harvesting source. 
Compared with C-RFID, the key difference of {\name} is that its communication does not rely on backscatter radio. 
The communication and energy harvesting of {\name} use different radio frequencies. 
Intermittent energy radio only affects the harvested energy and communication speed of {\name}, while its communication quality is comparable to BLE communication as long as the harvested energy is enough. 
In addition, when the harvested energy is high, e.g. the {\name} device is close to RF energy source, the high bound of data rate using {\name} is much higher than RFID and C-RFID. 
Moreover, since {\name} uses BLE protocol stack for communication, it can be naturally integrated into existing BLE networks, and it can easily control the communication parameters, such as Tx power, to optimize the communication performance. 

\section{Problem Analysis}
\label{sec:problem_statement}

Beacon applications and energy harvesting are well researched topics on their own, but using RF-based energy harvesting for BLE beacon applications entails a substantial challenge. 
The main problem is that the amount of harvested energy from RF energy source is too small for the operation of beacon applications. 
In this section, 
we make empirical studies to analyze the challenges and possible solutions to achieve BLE beacon applications using RF-based energy harvesting. 

\subsection{Choose RF-based Energy Harvesting}
\label{sec:choose_rf_eh}

While many energy harvesting techniques exist~\cite{xie:2012:wcm}, such as solar energy harvesting, kinetic energy harvesting from human movement, indoor light harvesting, etc., we chose the one based on RF energy harvesting (RF-EH) with dedicated energy source. 
In this type of system, dedicated RF energy transmitters are deployed at specific positions. They continuously radiate radio energy to the neighbor RF energy harvesting devices. 
Compared with the other EH solutions, RF-EH with dedicated energy source has the following advantages. 

\begin{itemize}
	\item The operation of dedicated RF energy source does not rely on environment or users. For example, although indoor lighting harvesting is feasible for indoor BLE beacon applications, 
the harvested energy depends on the environmental lighting strength. 
	\item The transmitted power of dedicated RF energy source can be adjusted based on the requirement. The harvested energy can be estimated based on the distance between RF energy transmitters and harvesters. 
	\item To harvest the required amount of energy, the size of the RF energy harvesting device is small enough to attach on beacon devices. 
	\item Proximity detection is an important function in beacon applications. RF energy source can not only serve for transmitting RF energy, but also provide distance estimation between energy transmitter and receiver based on the harvested energy. This property can help to improve the performance of beacon based proximity detection. 
\end{itemize}

\begin{table}
\center
\caption{Electric Current under various operations in the datasheet of BLE nRF51822 Beacon Device~\cite{nrf51822_nRFgo}.}
\begin{threeparttable}
\begin{tabular}{l | l}
\hline
\textbf{Operation} & \textbf{Current} \\ \hline \hline
Standby (ON mode) & 2.6 $\mu$A \\ \hline
Run code from RAM at 16 MHz & 2.4 mA \\ \hline
Transmitting (4dBm) & 11.8 mA \\ \hline
Receiving & 9.7 mA \\ \hline
\end{tabular}
\begin{tablenotes}
\scriptsize \item $\S$ Voltage supply range from 1.8 V to 3.6 V. 
\end{tablenotes}
\label{tab:ble_power_consumption}
\end{threeparttable}
\end{table}

\subsection{Energy Requirement and Energy Budget}

To understand the challenges on BLE beacons using harvested RF energy, 
we measure and analyze the basic performance of BLE beacons and RF energy harvesting implemented by off-the-shelf devices. 

\subsubsection{Energy Consumption}
To assess the energy demand of BLE beacon devices, we use BLE beacon device nRF51822~\cite{nrf51822_nRFgo}. 
In the first place, we analyze the power consumption of its basic operations as shown in Tab.\ref{tab:ble_power_consumption}. 
As expected, the current on data transmitting and receiving is much higher than the other operations. 
The energy consumption on receiving is 3731 times larger than standby mode under the same voltage supply and time duration. 
This means that it will save much energy by setting the idle listening to standby mode. 

To further understand the energy consumption of BLE devices, 
we use Monsoon power meter~\cite{monsoon_website} to measure the power consumption and the time duration of sending and receiving BLE packets. 
Firstly, we set up an auxiliary BLE device. It listens to the communication channel in full time. Once receiving a packet, it sends back a packet of the same size. 
Secondly, we set up a measuring BLE device. 
It wakes up from sleeping mode and sends a packet. Then it listens to the communication channel and receives a packet from the auxiliary device. After receiving the packet, it returns to sleep mode. 
Fig~\ref{pre:energy_consume} shows the energy consumption of the measuring BLE device in various packet sizes. 

\subsubsection{Harvested RF Energy}

To assess the energy supply, we measure the harvested energy using off-the-shelf RF-based wireless power transfer product Powercast~\cite{tx91501_p2110}. 
The RF energy transmitter has an \ac{EIRP} of 3\,W and operates at a center frequency of 915\,MHz. 
We measure the harvested power (before the voltage booster) of Powercast P2110 receiver~\cite{p2110b_evb} with a 0.5\,k$\Omega$ load at various distances. Fig~\ref{pre:energy_harvest} shows the harvested energy in 100\,ms. 

\subsubsection{Analysis on Energy}

Based on the preliminary experiments above, we have the following observation results. 

\begin{enumerate}[(i)]
\item At a 1\,m distance from the RF energy transmitter, more than 350$\,\mu$J of energy can be harvested within 100\,ms, and the energy consumption from 10 to 100 bytes is no more than 40\,$\mu$J. 
Therefore, the harvested energy of 100\,ms gives sufficient leeway to transmit and receive beacon messages of 40 bytes in a range of 3 meters. 
\item The harvested energy within 100\,ms is not enough for transmitting and receiving packets of one byte at a distance 3.5\,m. 
This issue can be solved by increasing the time duration of RF energy harvesting. 

\item 
Due to the exponential decay of RF signal, the harvested energy at 1\,m is 36 times larger than at 3.5\,m. 
If beacon nodes with energy harvester are deployed uniformly around the RF energy transmitter, 
the harvested energy in these beacon nodes will be quite imbalanced. 
If all these nodes are required to send beacon messages in the same rate, they must wait until the last node harvests enough energy. 
Therefore, we need a solution to increase the energy utilization efficiency of the beacon devices that are further away from the RF energy transmitter. 
\end{enumerate}

\begin{figure}
\centering
\subfigure[BLE Mote nRF51822~\cite{nrf51822_nRFgo}]{\includegraphics[height=0.35\columnwidth]{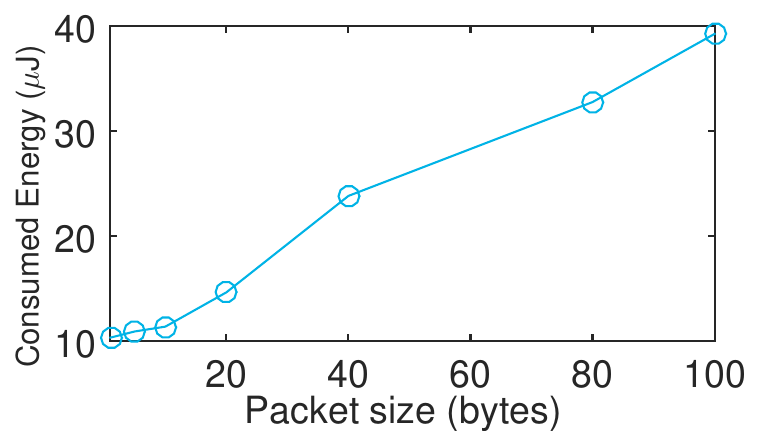}\label{pre:energy_consume}}
\subfigure[Powercast P2110~\cite{p2110b_evb}]{\includegraphics[height=0.35\columnwidth]{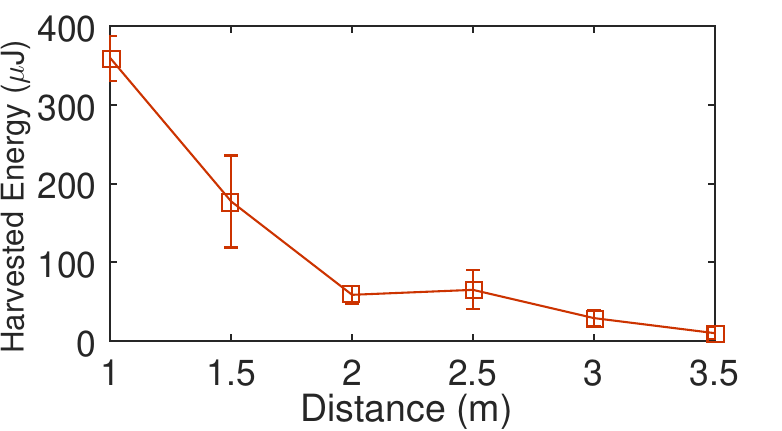}\label{pre:energy_harvest}}
\caption{Energy consumption and harvested RF energy in preliminary tests.}
\label{fig:energy_harvest_consume}
\end{figure}

\subsection{Solution Direction}

Based on the analysis above, 
we find if the harvested RF energy is used efficiently, it is enough to support the power consumption of BLE beacon applications. 
The key solutions to implement BLE beacon applications using RF energy harvesting are mainly two aspects: 

\begin{itemize}
\item \textit{Decreasing the power consumption of BLE beacon:}  
We must decrease the time length of beacon communication on both sender and receiver. 
The radio of beacon devices should be kept \emph{ON} for the minimum amount of time. 
The beacon receiver must avoid idle listening as much as possible. 

\item \textit{Increasing the efficiency of using harvested RF energy:} The beacon request rate must be carefully selected, so that the beacon nodes have enough harvested energy to finish the operation of sending and receiving beacon messages. 
The beacon devices must decrease the chance of unnecessary wakeup. 
Since the harvested energy in beacon devices is uneven, we must increase the energy utilization efficiency especially in the ``energy-weakest'' beacon nodes.  
\end{itemize}

\section{System Model}
\label{sec:system_model}

\begin{figure}
\centering
\includegraphics[width=0.8\columnwidth]{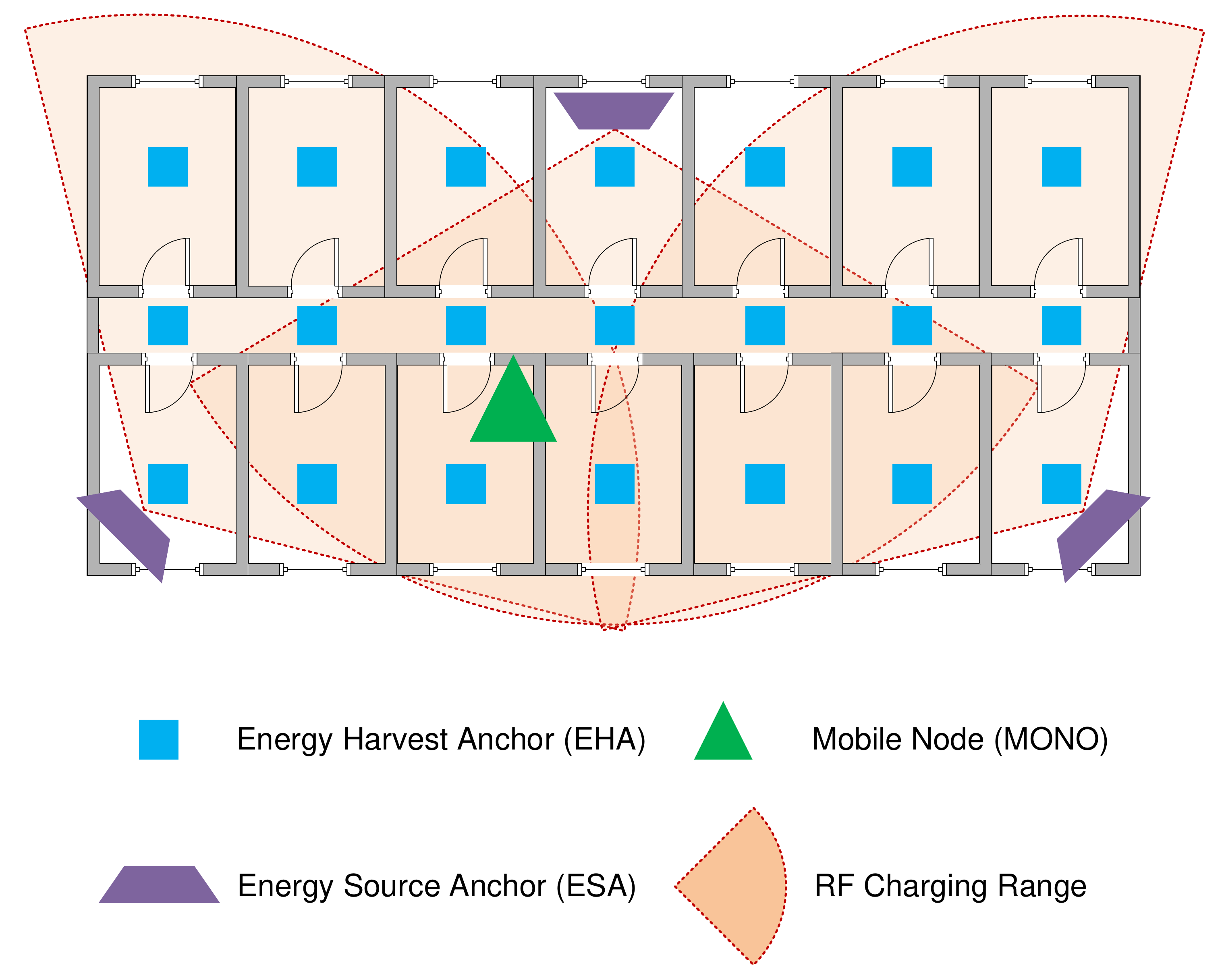}
\caption{An example of indoor BEH deployment with energy harvesting anchor (EHA), energy source anchor (ESA), and mobile node (MONO).} 
\label{pic:system_model}
\end{figure}

We design three types of nodes in {\name}. An example deployment scenario is shown in Fig.\ref{pic:system_model}. 

\begin{itemize}
	\item Energy Source Anchor (\textbf{{\charbe}}):  {\charbe} consists of two components. The first component is called \textbf{{\charbe}-Transmitter}, which is the dedicated RF energy sources using fixed frequency and Tx power. The second component is \textbf{{\charbe}-Controller}, which is a BLE beacon device attached to {\charbe}-Transmitter for controlling the operation (e.g. energy transmitting ON/OFF) of {\charbe}. 
{\charbe}s are statically deployed to emit energy to nearby devices. 
{\charbe} uses fixed power supply via a power cable. Therefore, it is unnecessary to save the communication energy of {\charbe}-Controller. We make {\charbe}-Controller listen to the communication channel in full time. 
	\item Energy Harvesting Anchor (\textbf{{\anbe}}): {\anbe} is a BLE beacon device connected with RF energy harvester.  
We assume {\anbe}s are uniformly deployed in the center of equally partitioned square cells. 
Each {\anbe} is pre-programmed with a unique ID representing the location of the square cell where it resides. 
{\anbe} does not store the harvested RF energy in battery. It directly utilizes the harvested energy for beacon operations. 
	\item Mobile Node (\textbf{{\mono}}): {\mono} has the same hardware structure as {\anbe}. It is attached on the mobile user and powered by RF wireless energy. {\mono} periodically wakes up from sleeping mode and sends beacon request to the neighbor {\anbe}s. We assume that it has a pre-defined table including the position and IDs of {\anbe}s. Once receiving the beacon message of the nearest {\anbe}, the {\mono} first decodes the {\anbe} ID in the message. Then it uses the pre-defined table to correlate the ID with a location, which represents the present proximity position of itself. {\mono} does not have batteries, while directly uses harvested energy from nearby RF energy transmitter for beacon operations. 
\end{itemize}

Based on the three types of nodes, we define a round of \textit{beacon procedure} in {\name} as follows. 

\begin{enumerate}[(i)]
	\item The {\charbe} constantly transmits RF energy to the nearby space. The {\mono}s and {\anbe}s harvest RF energy all the time in the effective RF energy range of {\charbe}.
	\item The {\mono} periodically broadcasts beacon request to the neighbor {\charbe}. 
	\item The {\charbe} that receives the beacon request notifies the {\anbe}s that are nearby the {\mono}. 
	\item The {\anbe}s receive the notification from {\charbe} and send beacon reply back to the {\mono}. 
	\item The {\mono} decodes the content of beacon packet from the nearest {\anbe}, and estimates its own proximity position based on the decoded beacon packet. 
\end{enumerate}

We evaluate the performance of {\name} beacon procedure by the metrics as follows: 

\begin{itemize}
	\item \emph{Packet Reception Rate (PRR)}: Suppose the {\mono} sends beacon request periodically. If the {\mono} successfully receives and decodes the {\anbe} beacon, we record this round of beacon as a success. Otherwise, it is recorded as a failure. PRR is defined as the number of success beacon divided by the number of beacon requests initiated by the {\mono}. PRR represents the success rate to finish beacon communication. 
	\item \emph{Proximity Detection Accuracy (PDA)}: Suppose the {\mono} is inside the cell area of {\anbe} $i$ and sends beacon request periodically. If the {\mono} receives and decodes the beacon message from {\anbe} $i$, we record this round of proximity detection as correct. Otherwise, it is recorded as wrong. PDA is defined as the number of correct proximity detection divided by the number of success beacons. 
PDA represents the correct rate that {\mono} localize itself in the closest {\anbe} cell. 
	\item \emph{Localization Error (LE)}: Suppose the {\mono} is inside the cell area of {\anbe} $i$ and sends beacon request periodically. If the {\mono} receives and decodes the beacon message from {\anbe} $j$, we record the localization error of this round of beacon as the distance between {\anbe} $i$ and $j$. If this round of proximity detection is correct, the error is zero. LE is calculated as the average value of recorded localization errors. LE is used as a complement index of PDA to illustrate the performance of {\name}. 
\end{itemize}

The main aim of {\name} system is to achieve a high PRR. 
Based on this aim, we require PDA has room level accuracy. 
The trade-off relation between PRR and PDA is not the aim of this research. 

\section{Key Components to Achieve Bateryless BLE Beacon Using RF Energy Harvesting}
\label{sec:key_tech}

In this section, we discuss the key components to achieve batteryless BLE beacon using RF energy harvesting. 
These components decrease the power consumption of beacon communication and increase the efficiency of utilizing harvested energy, including reducing time length of receiving messages by collision based beacon (Sec.\ref{sec:collision_beacon}), reducing idle listening time length by passive wakeup (Sec.\ref{sec:passive_wakeup}), reducing unnecessary beacons by range estimation (Sec.\ref{sec:distance_estimation}), and increase energy utilization efficiency by two wave beacons (Sec.\ref{sec:two_wave_wakeup}). 

\subsection{Collision based Beacon}
\label{sec:collision_beacon}

In this section, we focus on decreasing the power consumption of {\mono}. 
We minimize the time length of receiving beacon messages in {\mono}. 
We assume {\anbe}s listens to the communication channel in full time in this section. 
The approach to decrease the idle listening of {\anbe}s is present in the next section (Sec.\ref{sec:passive_wakeup}). 

Most existing communication solutions avoid collisions among packets, such as CSMA/CA and frequency hopping. Although these approaches are quite effective, they involve extra energy consumption for coordinating transmission timing and frequency selection. 
For example, CSMA/CA requires that the device listens to the channel to confirm whether the channel is free. 
However, the harvested RF energy is quite limited. According to our measurement, it is only enough for receiving and transmitting for a very short time. There is not enough energy for coordinating time slot and frequency. 
In addition, when multiple {\anbe}s send beacon messages to a {\mono}, the {\mono} still needs to receive the beacon messages from multiple {\anbe}s one after another. 
However, {\name} requires {\mono} to receive the beacon messages only from the nearest {\anbe}. 
Therefore, {\mono} wastes much energy on receiving unnecessary beacon messages by these existing solutions. 

Contrary to the existing solutions, we take advantage of "packet collision" to decrease the operational time of {\mono} as much as possible.  
We make all {\anbe}s send their packets at the same time, while {\mono} leverages the capture effect~\cite{arnbak:jsac:1987} to decode the strongest signal. 
The key advantage of this method is its energy efficiency, by which {\mono} only needs to be active for a single time slot for packet receiving. 

\subsubsection{Algorithm Design}
\label{sec:algorithm_design}

We present the detail design of collision based beacon as follows, including communication synchronization, orthogonal code, error correction, encoding process, and decoding process. 

\paragraph{Communication Synchronization}
\label{sec:packet_synchronisation}

We make each beacon packet consists of a preamble of one byte, a payload and a CRC of the whole packet. 
The beacon initiative message from {\mono} is named as \texttt{beacon-request}, and the reply beacon message as \texttt{beacon-reply}. 
According to the capture effect, 
the stronger signal can be decoded, if the second beacon arrives while processing the header of the previous beacon. 
Therefore, to leverage the capture effect, the packets from multiple {\anbe}s should arrive at {\mono} within each other's preamble. 
This gives us much space to synchronize the beacons from {\anbe}s. 
The challenges then becomes the synchronization of multiple \texttt{beacon-reply} messages. 

On one hand, if we make {\anbe}s wake up in duty cycle mode, they will not receive the \texttt{beacon-request} packet synchronously. Then the capture effect will not work. 
On the other hand, if the \texttt{beacon-request} packet is used to synchronize the operation of {\anbe}s, {\anbe}s must idle listening the communication channel in full time. This is not applicable due to the limited harvested energy of {\anbe}s. 
Moreover, although there are many existing synchronization approaches, these synchronization process will consume extra energy. 

Therefore, we assume all {\anbe}s are in listening mode in this section. 
We focus on decreasing the power consumption of {\mono}, 
while minimize the power consumption of {\anbe}s in the next section. 
To achieve this synchronization, the {\mono} broadcasts a \texttt{beacon-request} packet. 
The {\anbe}s use this packet as a synchronization signal that instructs all the receiving {\anbe}s to send \texttt{beacon-reply} with the information and their ID encoded in the payload simultaneously. 

\paragraph{Orthogonal Spreading Code}
\label{sec:orthogonal_codes}
Leveraging the capture effect alone however has limitations~\cite{velze:percom:2013}. 
For utilizing capture effect, 
the strongest signal needs a certain minimum signal-to-interference-plus-noise ratio (SINR). 
If this requirement is not satisfied, packets will collide and the content will not be retrieved. 
To overcome this limitation the packet payload is encoded with an orthogonal code to increase inter-packet distinction.  

Assume the value set size of beacon packet payload is limited, 
and a unique orthogonal code corresponds to a element in the value set. 
Take the {\anbe} ID in the payload as an example. 
Each {\anbe} ID corresponds a unique orthogonal code. 
In the encoding process at {\anbe}, 
each bit of the {\anbe} ID is multiplied by an orthogonal code unique for this {\anbe}. 
The encoded {\anbe} ID is then send in the payload of a packet.
The decoding process at {\mono} is an XOR operation between the payload of received packet with a list of orthogonal codes. Any method can be used as long as the codes all have zero cross-correlation with each other. In this paper a Hadamard matrix of size $k$ is used to generate the codes, i.e.,
\begin{equation}
H_{2^k} = \begin{bmatrix} H_{2^{k-1}} & H_{2^{k-1}}\\ H_{2^{k-1}} & -H_{2^{k-1}}\end{bmatrix} = H_2\otimes H_{2^{k-1}},
\end{equation}
where $H_2 = \begin{bmatrix} 1 & 1 \\ 1 & -1 \end{bmatrix}$, $2 \leq k \in \mathbb{N}$
and $\otimes$ is the Kronecker product.

In our implementation, 
we define an orthogonal code for each {\anbe} ID. 
For simplifying the implementation, 
we pre-install all the orthogonal codes in {\mono} before the experiment. 
The value set of beacon packet payload is all the {\anbe} IDs. 
The size of the orthogonal code is 16 bits and the total message length is 30 bytes in the payload of the beacon packet. 

\paragraph{Forward Error Correcting Code}

To further increase the success rate for decoding packet, the {\anbe} ID is first encoded with a \ac{FEC} code, before it is multiplied by the orthogonal codes. The \ac{FEC} is constructed by maximum minimum Hamming distance codes~\cite{macdonald1960design} i.e. codes having equal Hamming distance to each other. 
The decoding of FEC uses minimum distance decoding.

\paragraph{Encoding Process}
All the {\anbe}s have an array of FEC codes with equal Hamming distance to each other. 
Every {\anbe} ID is FEC encoded by replacing the ID with an unique FEC code from the array. 
The orthogonal code array is generated using the method described in Sec.~\ref{sec:orthogonal_codes}.
For the generated matrix, each $-1$ symbol is replaced with a $0$ and each row of the matrix represents a binary spreading code. 
Finally, every bit of the FEC code is represented by an unique orthogonal code in the orthogonal code array. 
If the bit is zero, the bitwise NOT of the orthogonal code is used. 

\paragraph{Decoding Process}
We assume the {\mono} has an array of all the orthogonal and \ac{FEC} codes used by the {\anbe}s.
For every entry in the orthogonal code array, the {\mono} tries to decode the packet. 
Suppose the {\anbe} ID is encoded inside packet payload. 
When a candidate {\anbe} ID is found, the decoded code is compared to the correlating FEC code of the candidate ID. 
Suppose each FEC code has equal Hamming distance $d$ to each other. 
We compare the Hamming distance $d_c$ of the candidate FEC code to a code from the FEC array. If $d_c < \frac{d}{2}$, we assume that the candidate ID is the correct one.

\subsubsection{Validity Test}
\label{sec:validity_testing}

\begin{figure}
	\centering
	\subfigure[Without orthogonal codes]{
		\includegraphics[width=0.8\columnwidth]{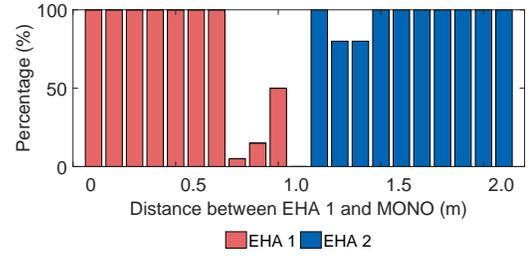}
		\label{exp:localization_before}
	}
	\subfigure[With orthogonal codes]{
		\includegraphics[width=0.8\columnwidth]{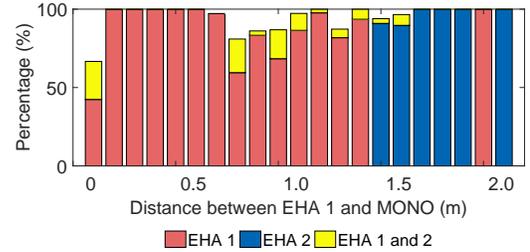}
		\label{exp:localization_after}
	}
	\caption{In the experiment of collision based beacon, two static {\anbe}s are placed two meters apart. We mark {\anbe} 1 at 0m and {\anbe} 2 at 2m in the x coordination of the figures. Compared with the results of (a), the results of (b) eliminate the packet reception ``dead zone'' using orthogonal codes.}
	\label{fig:localization_before_after}
\end{figure}

To verify that the collision based beacon works correctly, the following experiment was performed. 
We select the Smart BLE Beacon Kit form Nordic Semiconductor~\cite{nrf51822} as the platform of {\anbe}. This module has a coin size form with a PCB integrated antenna. 
To increase the radio sensitivity of receiver, we select the Nordic Semicoductor PCA10005 with BLE~\cite{nrf51822} as the platform of {\mono}. Compared with {\anbe}, {\mono} has an SMA connector with a connected quarter-wave helical monopole antenna of 1.6\,dBi gain. 
These two platforms are both the nRF51822 SoC with a ARM Cortex M0 from Nordic Semiconductor with BLE~\cite{nrf51822}. 

Two {\anbe}s are placed two meters apart and a {\mono} is placed at 20 measuring points in a straight line between the {\anbe}s. 
The interval distance of the measuring points is 10\,cm. 
The transmission power of the {\anbe}s is set to 0\,dBm. The {\mono} stays at each measuring point for 30\,s and sends \texttt{beacon-request} every second. 
The two {\anbe}s send their \texttt{beacon-reply} back immediately after receiving the \texttt{beacon-request}. 

The experiment is first performed without orthogonal codes, and the results are shown in Fig.~\ref{exp:localization_before}. 
After that, the same experiment is performed with orthogonal codes, and the results are shown in Fig.~\ref{exp:localization_after}. 
In Fig.~\ref{exp:localization_before}, most of the collision packets can be correctly decoded in capture effect. 
But there is a ``dead zone'' between the two {\anbe}s, i.e. an area of no packet reception. This is because the packets collide but the SINR is insufficient to receive a correct packet in this region. Fig.~\ref{exp:localization_after} shows that the ``dead zone'' is eliminated using orthogonal codes. Moreover, there is some chance that the {\mono} decodes the two {\anbe} IDs from one collision beacon. 

\subsection{Passive Wakeup}
\label{sec:passive_wakeup}

The collision based beacon introduced in Sec.~\ref{sec:collision_beacon} requests synchronization of packet transmission among {\anbe}s. 
However, the harvested energy is not enough for {\anbe}s to continuously listen to the synchronization signal, i.e. \texttt{beacon-request}. 
Therefore, we propose passive wakeup approach to wakeup {\anbe}s from sleeping mode only when the \texttt{beacon-request} is sent from {\mono}. 

\subsubsection{Basis of Passive Wakeup} 

The existing BLE beacon devices do not have passive wakeup function. 
We achieve RF-based passive wakeup by taking advantage of the properties of BLE device, RF energy transmitter and harvester as follows. 

\begin{enumerate}[(i)]
	\item The harvested RF energy of {\anbe} is very sensitive to the variation of the Tx power from RF energy transmitter. {\anbe} can detect the ON/OFF state of {\charbe}-Transmitter by measuring the harvested power. Therefore, changing the Tx power from ON to OFF state in the {\charbe}-Transmitter is used as the wakeup signal to {\anbe}s. 
	\item 
Analog to Digital Converter (ADC) is widely used to measure the voltage level of batteries. 
ADC of Nordic nRF51822 enables sampling of external signals through a front-end multiplexer. 
Using the same BLE device as shown in Sec.\ref{sec:validity_testing} (Smart Beacon Kit form Nordic Semiconductor with BLE~\cite{nrf51822}), we measure the power consumption of different operations as shown in Table~\ref{tab:power_consumption_wpa}. 
The power consumption of ADC measurement is much lower than transmitting and receiving packets. Moreover, it is of the same magnitude as the harvested power (from $\approx$\,3.2\,mW at 1\,m to $\approx$\,0.79\,mW at 3\,m). Although the power consumption of ADC measurement is larger than harvested power at the distance of 3\,m, {\anbe} can conduct ADC measurement periodically and be in sleeping mode the rest of time. This ensures that the average power consumption of periodical ADC measurement is lower than the harvested energy. If the {\charbe} is switched off shortly for sending passive wakeup signal, the remaining energy in the capacitor of {\anbe} is enough for its ADC operation. Therefore, we make {\anbe}s periodically wakeup from sleeping mode and perform the measurement of the the harvested energy voltage on ADC port. 
	\item {\charbe}s are deployed with fixed power supply for transmitting wireless energy. Therefore, {\charbe}-Controller, which attaches on {\charbe}-Transmitter, has enough energy to listen to the communication channel continuously. We make {\charbe}-Controller listens to the \texttt{beacon-request} signal from {\mono} in full time. 
\end{enumerate}

\begin{table}[t]
	\caption{Measurements about Energy consumption of {\anbe} node operations.}	
	\center
	\begin{threeparttable}
	\begin{tabular}{ c | c | c | c }
	    \hline
		\textbf{Operation} & {\textbf{Power (mW)}} & {\textbf{Time (ms)}} & {\textbf{Energy ($\mu$J)}}\\
		\hline
		\hline
		Transmitting& 35.88 & 0.80 & 28.7\\
		\hline
		Receiving & 20.17 & 0.60 & 12.1\\
		\hline
		ADC & 1.69 & 0.65 & 1.10\\
		\hline
		Sleeping & 0.14 & $-$ & $-$\\
		\hline
	\end{tabular}
	\begin{tablenotes}
		\scriptsize \item $\S$ Voltage supply is 3 V. The payload length of BLE packet is 30 bytes. 
	\end{tablenotes}
	\label{tab:power_consumption_wpa}	
	\end{threeparttable}
\end{table}

\subsubsection{Procedure of Passive Wakeup} 
\label{sec:workflow_passive_wakeup}

\begin{figure}[t]
\centering
\includegraphics[width=0.8\columnwidth]{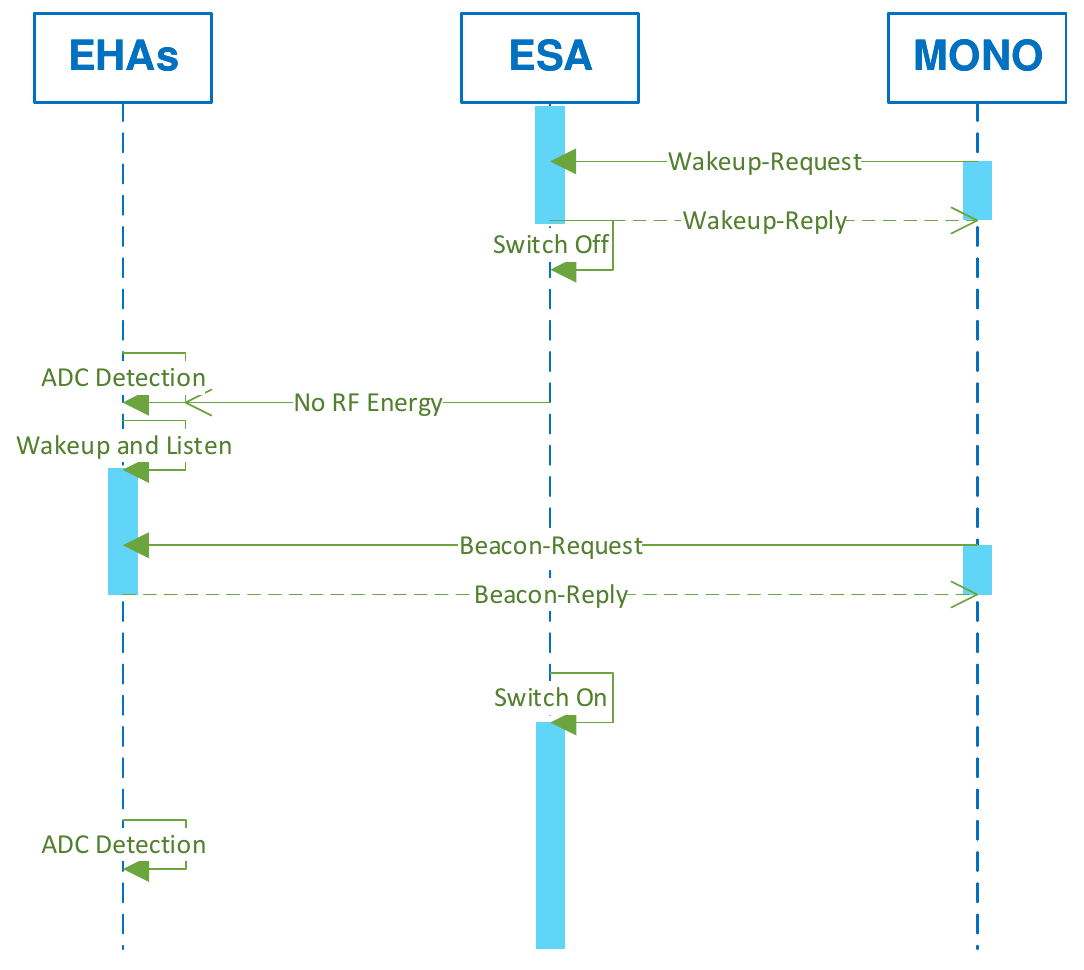}
\caption{Workflow of passive wakeup process based on RF energy transmitting.} 
\label{pic:passive_wakeup}
\end{figure}

Now, we combine the above three properties together to implement passive wakeup.

First of all, each {\anbe} makes ADC detection with period $t_c$. {\mono} broadcasts a \texttt{wakeup-request} message to {\charbe}. 
The {\charbe}-Controller sends a \texttt{wakeup-reply} message back immediately after receiving \texttt{wakeup-request}. 
This \texttt{wakeup-reply} from the {\charbe} is used to notify the {\mono} that {\anbe}s will be wakeup in listening mode. 
After sending the \texttt{wakeup-reply}, the {\charbe} sends a passive wake up signal to all the neighbor {\anbe}s by switching off the Tx power of {\charbe}-Transmitter for a short time and then switch on again. 
Once a voltage falling of the harvested power is measured using ADC detection by a {\anbe}, it wakes up from sleeping mode and starts listening. 
Meanwhile, the {\mono} receives the \texttt{wakeup-reply} from the {\charbe} and waits for $t_c$ to broadcast the \texttt{beacon-request} to the {\anbe}s. 
After receiving the \texttt{beacon-request} from the {\mono}, the {\anbe}s send \texttt{beacon-reply} back to the {\mono}. 
Then the {\mono} decodes the packet by the collision based beacon approach as explained in Sec.~\ref{sec:collision_beacon}. 
In this workflow, each {\anbe} periodically wakes up and conducts ADC measurement to detect passive wakeup signal. 
To guarantee that all the {\anbe}s can hear the \texttt{beacon-request} signal from the {\mono}, each {\anbe} wakes up immediately after receiving the passive wakeup signal and listens a maximal period of $t_c$. 
If a {\anbe} receives nothing in that time frame, then it sleeps again.  
The whole process is illustrated in Fig.~\ref{pic:passive_wakeup}. 

\subsubsection{Optimization on Passive Wakeup}
\label{sec:opt_passive_wakeup}

To further reduce the average power consumption of {\anbe}, we optimize the ADC detection period $t_{{c}}$ of {\anbe}. 

Suppose {\mono} requests beacon from {\anbe}s with period $t_m$. 
Name the average power consumption of a {\anbe} in $t_m$ as $P_a$. 
As explained in Sec.\ref{sec:workflow_passive_wakeup}, {\charbe}-Transmitter switches OFF as wakeup signal after sending \texttt{wakeup-reply}. 
Suppose the time length from switching {\charbe}-Transmitter OFF until detecting passive wakeup signal by the {\anbe} is $t_{u}$. 
After receiving \texttt{wakeup-reply} from {\charbe}, 
the {\mono} sleeps for $t_{c}$ and then sends \texttt{beacon-request}. 
Then the {\anbe} is waken up and stays in receiving state for time length $t_{{rx}} = {t_c} - {t_u}$. 
Denote ${k_{{d}}} = \left\lfloor {{{ \frac {{t_m}}{{t_c}}}}} \right\rfloor$ as the number of ADC measurement during $t_m$. 
Name $t_{{d}}$, $t_{{rx}}$ and $t_{{tx}}$ as the time spent on ADC measurement, receiving packet, and packet transmission by {\anbe}, respectively. 
Name $P_{{d}}$, $P_{{rx}}$, $P_{{tx}}$, and $P_{{s}}$ as the power consumption of ADC measurement, packet reception, packet transmission and sleeping by {\anbe}, respectively. 
We assume the time $t_{u}$ of detecting passive wakeup signal happens uniformly in the ADC measurement period $t_{c}$. 

\begin{figure}[t]
\centering
\subfigure[Setup map.]{\includegraphics[height=0.35\columnwidth]{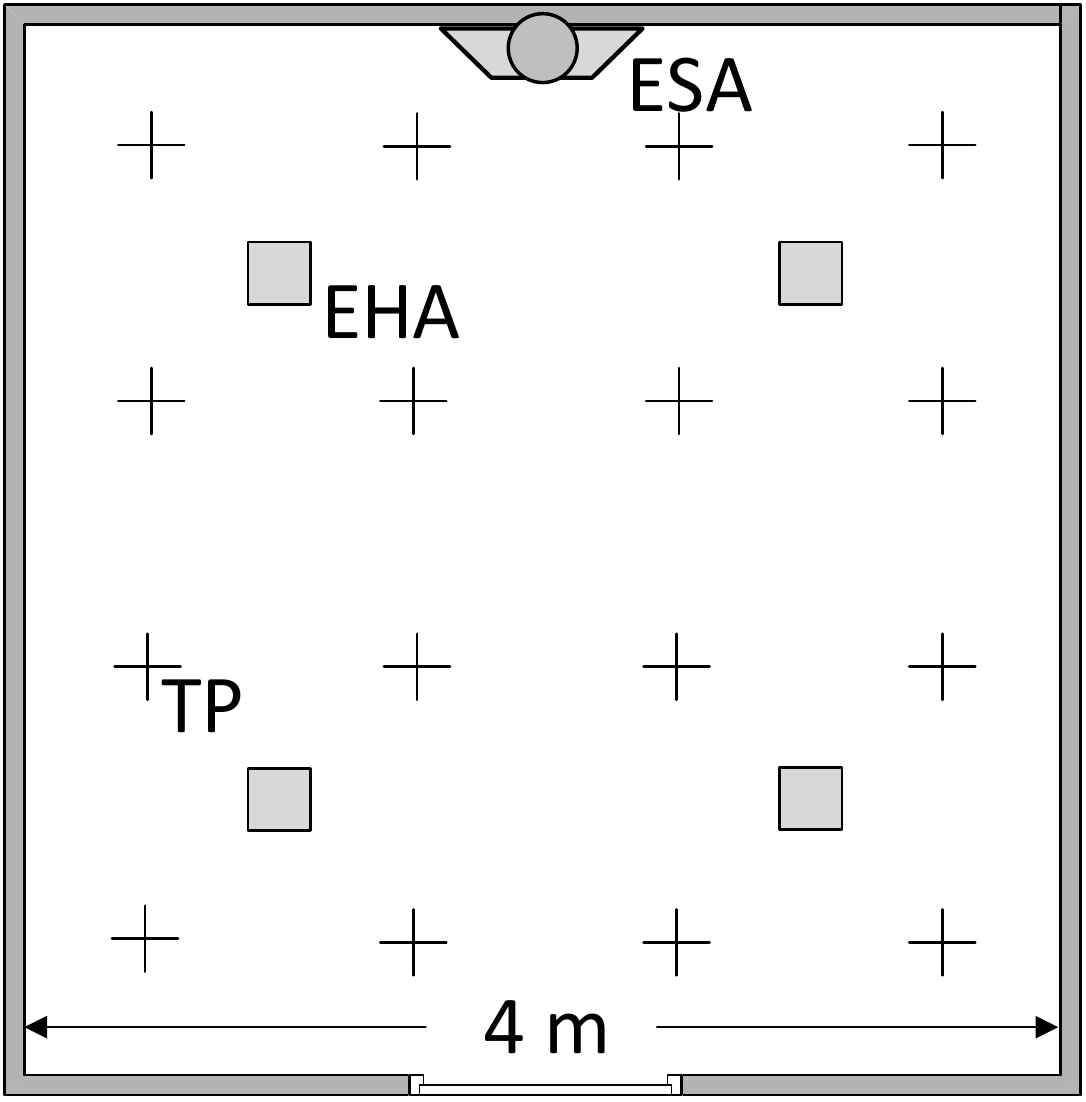}\label{fig:collision_drop_setup}}
\subfigure[PRR and PDA results.]{\includegraphics[height=0.55\columnwidth]{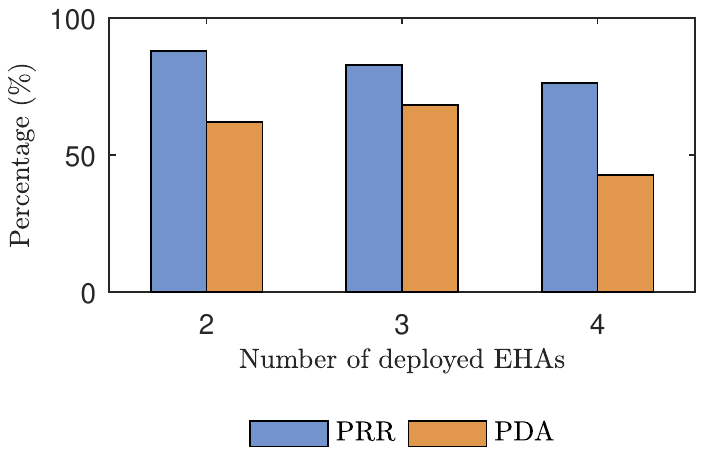}\label{pic:cell_level_performance_drop}}
\subfigure[Error distance of LE.]{\includegraphics[height=0.5\columnwidth]{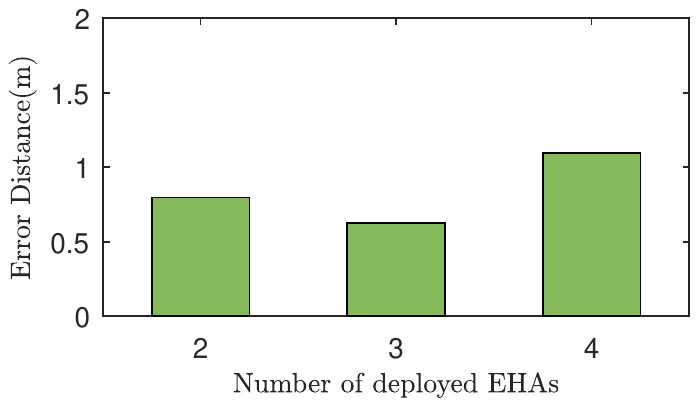}\label{pic:cell_level_performance_drop_ed}}
\caption{(a) The places marked as "$+$" are the testing positions (TP) of {\mono}. (b) Packet reception rate (PRR) and proximity detection accuracy (PDA) of collision based beacons in different number of {\anbe} nodes deployed. The radio interference from multiple beacon nodes causes the decrease of PRR and PDA. (c) Error distance of LE in proximity detection.}
\label{fig:rss_distance}
\end{figure}

\begin{proposition}

The value of $t_{c}$ producing minimum expected power consumption ${{\mathrm{E}(P_a)}}$ at {\anbe} is 

\begin{equation}
\arg \min_{t_c}{\left[ {{\mathrm{E}(P_a)}} \right]} \approx \sqrt {{{\frac {2{t_{m}}{t_{{d}}}{P_{{d}}}} {{P_{{rx}}}}}}}.
\label{eq:min_value_para}
\end{equation}
\end{proposition}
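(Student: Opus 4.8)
The plan is to write the expected energy a single {\anbe} consumes over one beacon period $t_m$ as a sum of four contributions---ADC sensing, reception, transmission, and sleeping---divide by $t_m$ to obtain the average power $\mathrm{E}(P_a)$, and then minimize the resulting one-variable function over $t_c$ by elementary calculus. For the sensing term I would use that the node performs $k_d = \lfloor t_m/t_c\rfloor$ measurements during $t_m$, each costing $t_d P_d$, and replace the floor by $k_d \approx t_m/t_c$; this is the first source of the $\approx$ in the claim. For the reception term I would invoke the stated uniformity assumption: since $t_u$ is uniform on $[0,t_c]$ and $t_{rx}=t_c-t_u$, the expected listening time is $\mathrm{E}(t_{rx})=t_c/2$, giving expected reception energy $\tfrac{t_c}{2}P_{rx}$. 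The transmission term $t_{tx}P_{tx}$ is independent of $t_c$, and the sleeping term fills the remainder of the period, contributing $P_s(t_m - k_d t_d - t_{rx} - t_{tx})$.

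Collecting these terms and dividing by $t_m$, the average power splits into a part independent of $t_c$ plus the two competing terms $\frac{t_d(P_d-P_s)}{t_c}$ and $\frac{t_c(P_{rx}-P_s)}{2t_m}$, one decreasing and one increasing in $t_c$. Differentiating with respect to $t_c$, setting the derivative to zero, and solving the resulting equation gives $t_c=\sqrt{2t_m t_d(P_d-P_s)/(P_{rx}-P_s)}$, which I would confirm is a minimum from the positive second derivative (equivalently, by applying AM--GM to the two terms, whose product is the constant $t_d(P_d-P_s)(P_{rx}-P_s)/(2t_m)$). The final step is to drop $P_s$ against $P_d$ and $P_{rx}$---justified because the sleep power is roughly an order of magnitude below the ADC power and two below the reception power (cf. Table~\ref{tab:power_consumption_wpa})---which collapses the expression to the stated $\sqrt{2t_m t_d P_d/P_{rx}}$ and accounts for the second source of the $\approx$.

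The main obstacle I anticipate is not the calculus but the bookkeeping of the approximations: making explicit that replacing $\lfloor t_m/t_c\rfloor$ by $t_m/t_c$ and neglecting $P_s$ are both legitimate in the operating regime, and verifying that once these simplifications are made no residual $t_c$-dependence hiding in the sleeping term changes the optimizer. I would also note that the interior optimum should in principle be clipped to the feasible range $t_c\le t_m$ (so that at least one ADC measurement occurs per period), although for the realistic parameter values this constraint is already satisfied by the unconstrained minimizer.
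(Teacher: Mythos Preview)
Your proposal is correct and follows the same overall route as the paper: write $\mathrm{E}(P_a)$ as a sum of ADC, reception, transmission, and sleep contributions over a period $t_m$, replace $k_d=\lfloor t_m/t_c\rfloor$ by $t_m/t_c$, use $\mathrm{E}(t_{rx})=t_c/2$ from the uniform assumption, and minimize the resulting convex function of $t_c$ by setting the derivative to zero.

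In one respect your bookkeeping is actually cleaner than the paper's. The paper collects the transmission and sleep terms into a single quantity $H=(P_{tx}t_{tx}+P_s t_s)/t_m$ and treats it as a constant when differentiating, even though $t_s=t_m-(k_d t_d+t_{rx}+t_{tx})$ depends on $t_c$; in effect the paper drops $P_s$ implicitly at that step. You instead carry $P_s$ through to obtain the exact minimizer $\sqrt{2t_m t_d(P_d-P_s)/(P_{rx}-P_s)}$ and then invoke $P_s\ll P_d,P_{rx}$ explicitly, which makes the two approximation sources transparent. Your additional remarks (AM--GM confirmation of the minimum, and the feasibility check $t_c\le t_m$) are sound refinements not present in the paper.
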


\begin{proof}
The CDF of $t_u$ is $F(t_u) = {{\frac {t_u} {{t_c}}}},t_u \in [0,{t_c})$, and the expectation of $t_u$ is $\mathrm{E}(t_u) = {{\frac {{t_c}} 2}}$. 
Then the expected receiving state time is $t_{{rx}} = t_{c} - \mathrm{E}(t_u) = {{\frac {{t_c}} 2}}$. 
The time of sleeping during $t_{m}$ is $t_{{s}} =  t_m - (k_{{d}}t_{{d}} + t_{{rx}} + t_{{tx}})$. 
Then the average power consumption of a {\anbe} in a beacon period $t_{m}$ is 

\begin{equation}
\begin{split}
P_{{a}} &= \frac{P_{{rx}}t_{{rx}}+P_{{tx}}t_{{tx}}+k_{{d}}P_{{d}}t_{{d}}+P_{{s}}t_{{s}}}{t_m} \\
&= \frac{{{P_{{{rx}}}}}}{{2{t_m}}}{t_c} + {P_{{d}}}{t_{{d}}}\frac{1}{{{t_c}}} + H, 
\end{split}
\label{eq:power_consumption}
\end{equation}

where $H = {{\frac {{P_{{{tx}}}}{t_{{{tx}}}} + {P_s}{t_s}} {{t_m}}}}$. 
As $k_{{d}}$ is discrete value, we replace it with a continuous value ${k^{c}_{{d}}} = {{\frac {{t_m}} {{t_c}}}}$ for estimating the minimum value $\mathrm{E}(P_{a})$. 
The value of $t_{c}$ that minimizes the expectation of $P_{a}$ is calculated as 
$\arg\min_{t_c}{\left[ {\mathrm{E}({P_a})} \right]} = \{ {t_c}\left| {{{\frac {\partial {P_a}} {\partial {t_c}}}} = 0,\ {k_{{d}}} = k_{{d}}^{{c}}} \right.\}$ which results in (\ref{eq:min_value_para}). 
\end{proof}

\subsection{Range Estimation}
\label{sec:distance_estimation}

The passive wakeup approach reduces the power consumption of idle listening in {\anbe}s as present in Sec.\ref{sec:passive_wakeup}. 
However, there are two problems as follows. 

\begin{enumerate}[(i)]
	\item The passive wakeup approach activates all the {\anbe}s near the {\charbe}. 
As the number of {\anbe}s that send collision based beacon increases, 
the decoding error in {\mono} might increase. 
To prove this assumption, we deploy 2, 3 and 4 {\anbe}s respectively around a {\charbe}, and test the PRR, PDA and LE. 
The deployment setup and testing positions are illustrated in Fig.~\ref{fig:collision_drop_setup}. 
The test results are shown in Fig.~\ref{pic:cell_level_performance_drop} and Fig.~\ref{pic:cell_level_performance_drop_ed}. 
The key result is that the PRR and PDA decrease as the number of {\anbe}s increases. 
This is mainly caused by the radio interference to the orthogonal code from multiple {\anbe}s. 
It means that radio interference caused by multiple {\anbe}s limits the performance and scalability of {\name} system. 
	\item The aim of {\name} is to make {\mono} receive information, such as node ID, from the nearest {\anbe}. It is more energy efficient to wakeup only the {\anbe}s which are close to the requesting {\mono}. 
\end{enumerate}

\begin{figure}[t]
\centering
\subfigure[RSS of BLE beacon.]{\includegraphics[height=0.4\columnwidth]{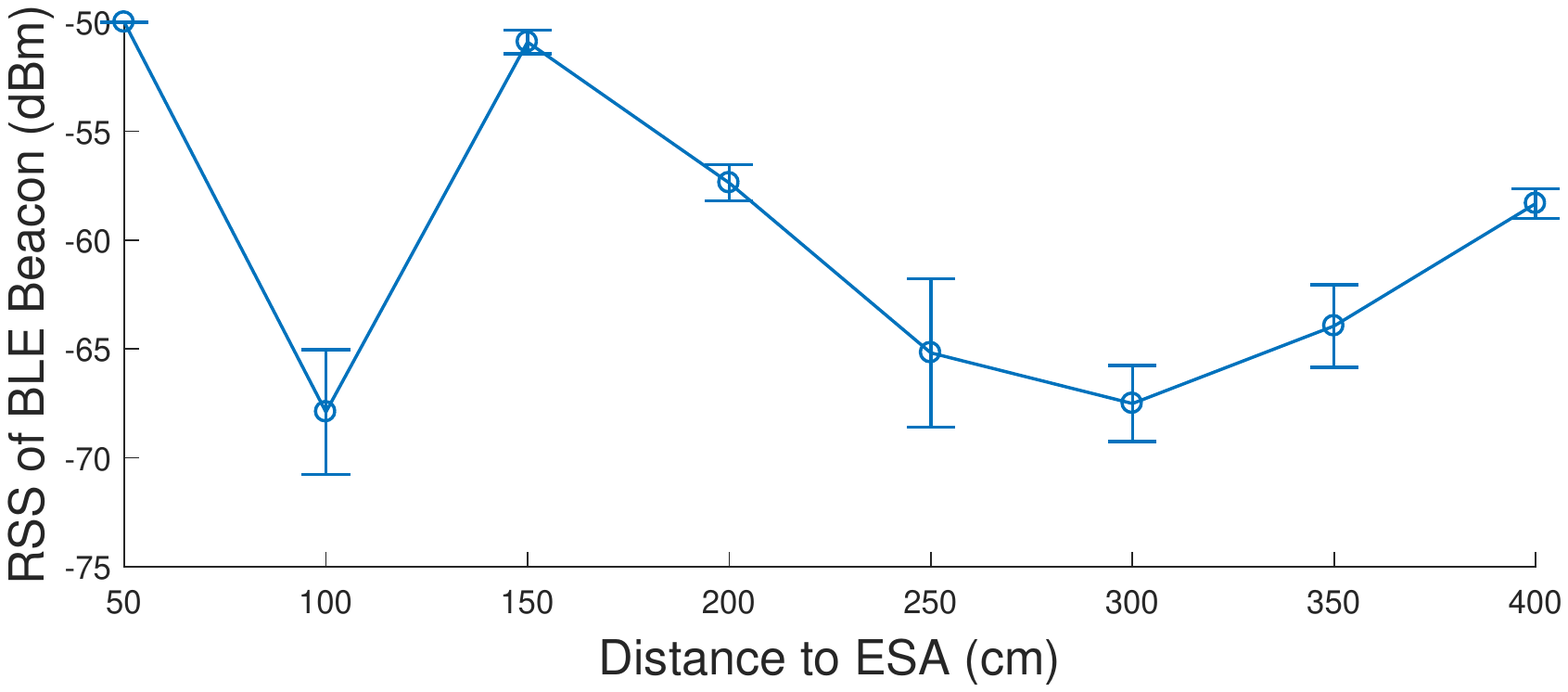}\label{fig:ble_rssi}}
\subfigure[Received power in RF energy harvester.]{\includegraphics[height=0.4\columnwidth]{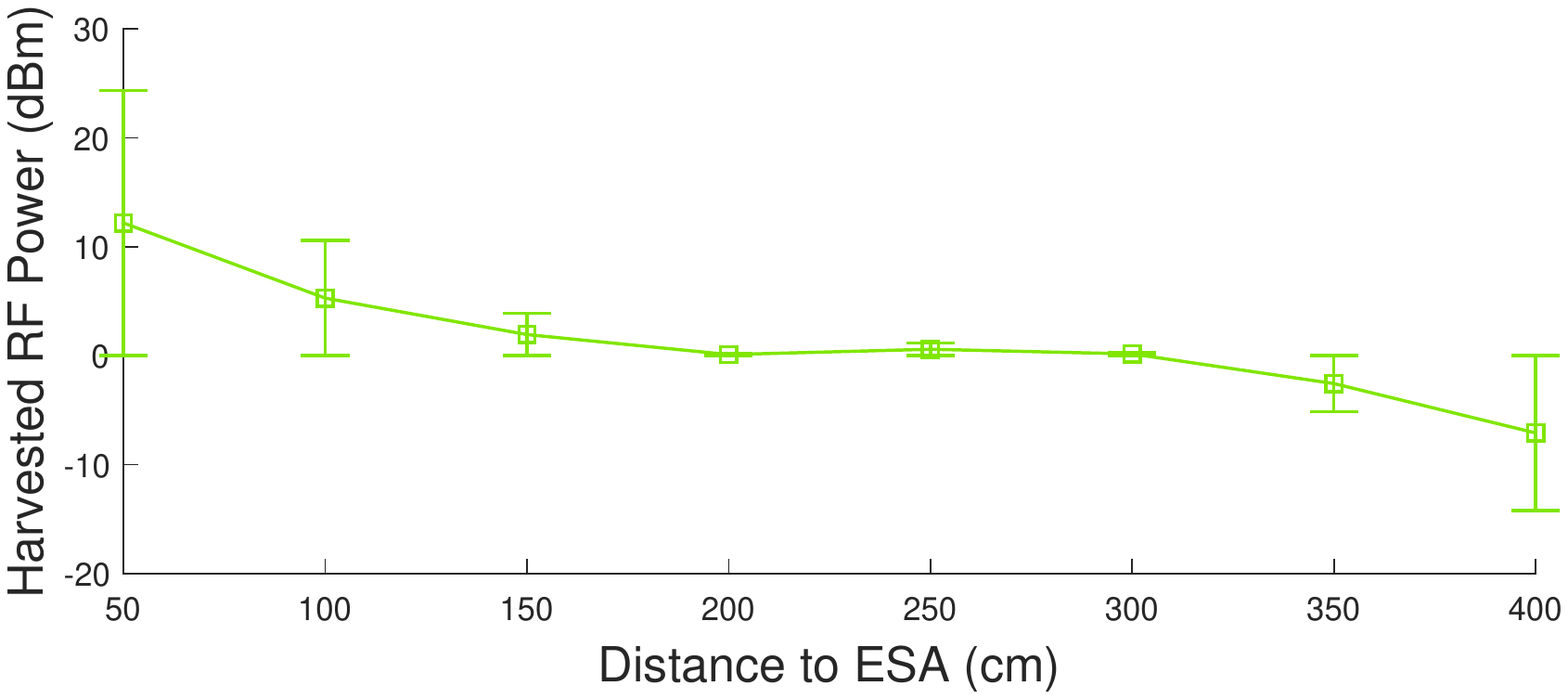}\label{fig:harvested_power}}
\caption{(a) The RSS of BLE beacon in the nRF51822 device. (b) The harvested power in the RF energy harvester Powercaster P2110.}
\label{fig:rss_eh_distance}
\end{figure}

Based on the analysis above, 
we need to restrict the number of {\anbe}s used for beaconing. 
We propose to use the harvested RF energy of {\mono} for estimating its possible range. This range information is used to select the {\anbe}s that are near {\mono} for sending collision based beacon. 

\subsubsection{Properties of Harvested Power} 

We take advantage of the power of harvested RF energy to make range estimation for two reasons.

\begin{enumerate}[(i)]
	\item The power of harvested RF energy is more stable than the received signal strength (RSS) of BLE packets at the same distance. As a comparison, we measure the power of harvested RF energy using ADC and the RSS of BLE packets at various distances. At each testing position, we record 100 measurement results. The measurement results are shown in Fig.~\ref{fig:rss_eh_distance}.  The RSS of BLE fluctuates much, which is difficult to be used for distance estimation. The power of harvested RF energy is rather stable. Therefore, we take advantage of the attenuation of harvested power over distances to estimate the possible location range of {\mono}. 
	\item {\charbe} constantly transmits RF energy to the nearby space. Its transmission radio covers the whole deployment area of {\anbe}s. We do not need to deploy any additional communication component for range estimation. 
\end{enumerate}

\subsubsection{Procedure of Range Estimation}

Suppose {\anbe}s $\{1,...,i,...,n\}$ are deployed uniformly around the {\charbe}. 
The ADC based power measurement results of the {\mono} and {\anbe} are $\xi$ and $\mu$, respectively. 
At the beginning of the range estimation process, 
each {\anbe} reports its power measurement $\mu$ to the {\charbe}. 
Name the measured power of {\anbe} $i$ as $\mu_{i}$. 

Although the power of harvested RF energy is more stable than the RSS of BLE beacon packet, 
it is inaccurate to use it for calculating the precise location of {\mono}. 
As explained in Sec.\ref{sec:system_model}, {\charbe} is pre-programmed with the position map of {\anbe}s. 
Therefore, we define a threshold value $\rho$ of $\mu$ to categorize the {\anbe}s into two parts: the {\anbe}s which are closer to the {\charbe}, and the {\anbe}s which are farther to the {\charbe}. 
To set the value of $\rho$, 
we pick the {\anbe} which is at the geographic middle position of the deployed {\anbe}s. Then we set $\rho$ equal to the measured harvested power of the picked {\anbe}. 

After comparing the value between $\mu$ and $\rho$, 
we name the two sets of {\anbe}s as $\Delta_{c} = \{ ${\anbe}$\  i |\ \mu_{i}  \ge \rho \}$ and $\Delta_{f} = \{ ${\anbe}$\ {i}\ |\ \mu_{i}  < \rho \}$. 
To request beacon services, 
the {\mono} broadcasts a \texttt{beacon-request} with $\xi$. 
After receiving the \texttt{beacon-request}, the {\charbe} compares $\xi$ with the threshold value $\rho$. 
If $\xi \leq \rho$, the {\charbe} sends the passive wakeup signal and broadcasts a \texttt{sleep} packet with the IDs of the {\anbe}s in $\Delta_{c}$. 
If $\xi > \rho$, the {\charbe} sends the passive wakeup signal and broadcasts a \texttt{sleep} packet with the IDs of the {\anbe}s in $\Delta_{f}$. 
Once woken up, the {\anbe}s listen to the message from the {\charbe}. 
If the {\anbe} does not receive a \texttt{sleep} command with its ID, it keeps listening to a \texttt{beacon-request} from the {\mono}, otherwise goes to sleep. 
The detailed process is presented in Fig.~\ref{fig:passive_wakeup_id_range}. 

The accuracy of range estimation using harvested RF power is low, which could cause error in range estimation.  
Suppose the range estimation of the {\mono} is wrong, and the BLE radio transmission range of {\anbe} is set to be only inside its cell area. 
Then the {\mono} can not receive a \texttt{beacon-reply} from the {\anbe}s. 
We cope with this error situation as follows. 
If the {\mono} does not receive any \texttt{beacon-reply} in a beacon period, the {\mono} will request the {\charbe} to re-send wakeup signal by \texttt{wakeup-request} packet with ``\textit{request-again}'' in the payload. 
If the {\charbe} sends wakeup signal to the {\anbe}s of $\Delta_{c}$ previously, it sends wakeup signal to the {\anbe}s in $\Delta_{f}$ now. 
If the {\charbe} sends wakeup signal to the {\anbe}s of $\Delta_{f}$ previously, it sends wakeup signal to the {\anbe}s in $\Delta_{c}$ now. 
For the convenience of expression, 
we name the range of {\anbe}s in the first round of wakeup as $\Delta$, and the second round as $\overline \Delta$. 

\begin{figure}[t]
\centering
\includegraphics[width=0.98\columnwidth]{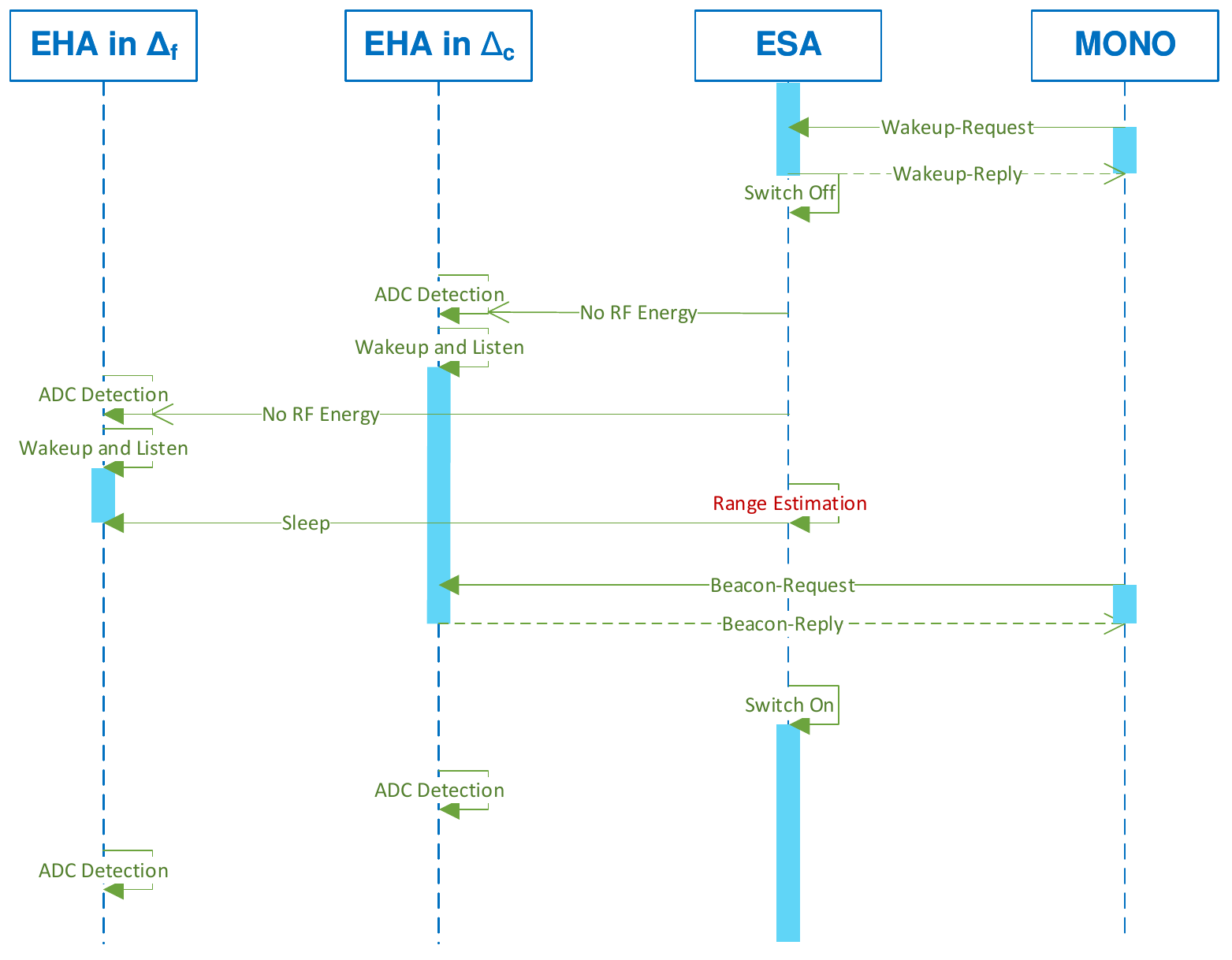}
\caption{{\anbe}s send beacon messages to {\mono} with range estimation. In this example workflow, {\charbe} sends \texttt{sleep} packet with the {\anbe} IDs belonging to $\Delta_{f}$. After receiving the \texttt{sleep} command, {\anbe}s in $\Delta_{f}$ stops listening and changes to sleep mode, and {\anbe}s in $\Delta_{c}$ keep listening to the \texttt{beacon-request}.}
\label{fig:passive_wakeup_id_range}
\end{figure}

\subsection{Two Wave Beacons}
\label{sec:two_wave_wakeup}

The range estimation approach of Sec.\ref{sec:distance_estimation} reduces some unnecessary beacon communication by waking up the {\anbe}s of specified area. 
However, this approach relies on the measurement of harvested power. 
The measurement value of $\xi$ and $\mu$ can be affected by environmental factors, including obstacles between devices, relative direction of antennas between energy transmitter and harvester, destructive RF interference from nearby {\charbe}s, etc. 
Therefore, we propose an approach called \emph{two wave beacons} to reduce part of unnecessary beacons and increase energy utilization efficiency without extra measurements or range estimation. 
The \emph{two wave beacons} approach is based on the two properties of RF energy harvesting as follows. 

\begin{enumerate}[(i)]
	\item According to the measurement results as shown in Fig.\ref{pre:energy_harvest}, the harvested energy by {\anbe}s is quite unbalanced. 
{\anbe} at 1.0\,m to the {\charbe}-Transmitter harvests 36 times more energy than at 3.5\,m. 
Therefore, it is more important to reduce unnecessary beacons and save energy in the {\anbe}s which are further away from {\charbe}. 
	\item According to the collision based beacon in Sec.\ref{sec:collision_beacon}, all the {\anbe}s near the {\mono} must wake up at the same time. 
This means that, in every round of beacon communication, all the {\anbe}s must wait until the {\anbe} which harvests the lowest power has enough energy. 
Therefore, it is necessary to increase the energy efficiency of {\anbe}s which are further away from the {\charbe} by shifting some of their workload to the closer {\anbe}s. 
\end{enumerate}

\subsubsection{Procedure of Two Wave Beacons}
\label{sec:two_wave_procedure}

\begin{figure}
\centering
\includegraphics[width=0.75\columnwidth]{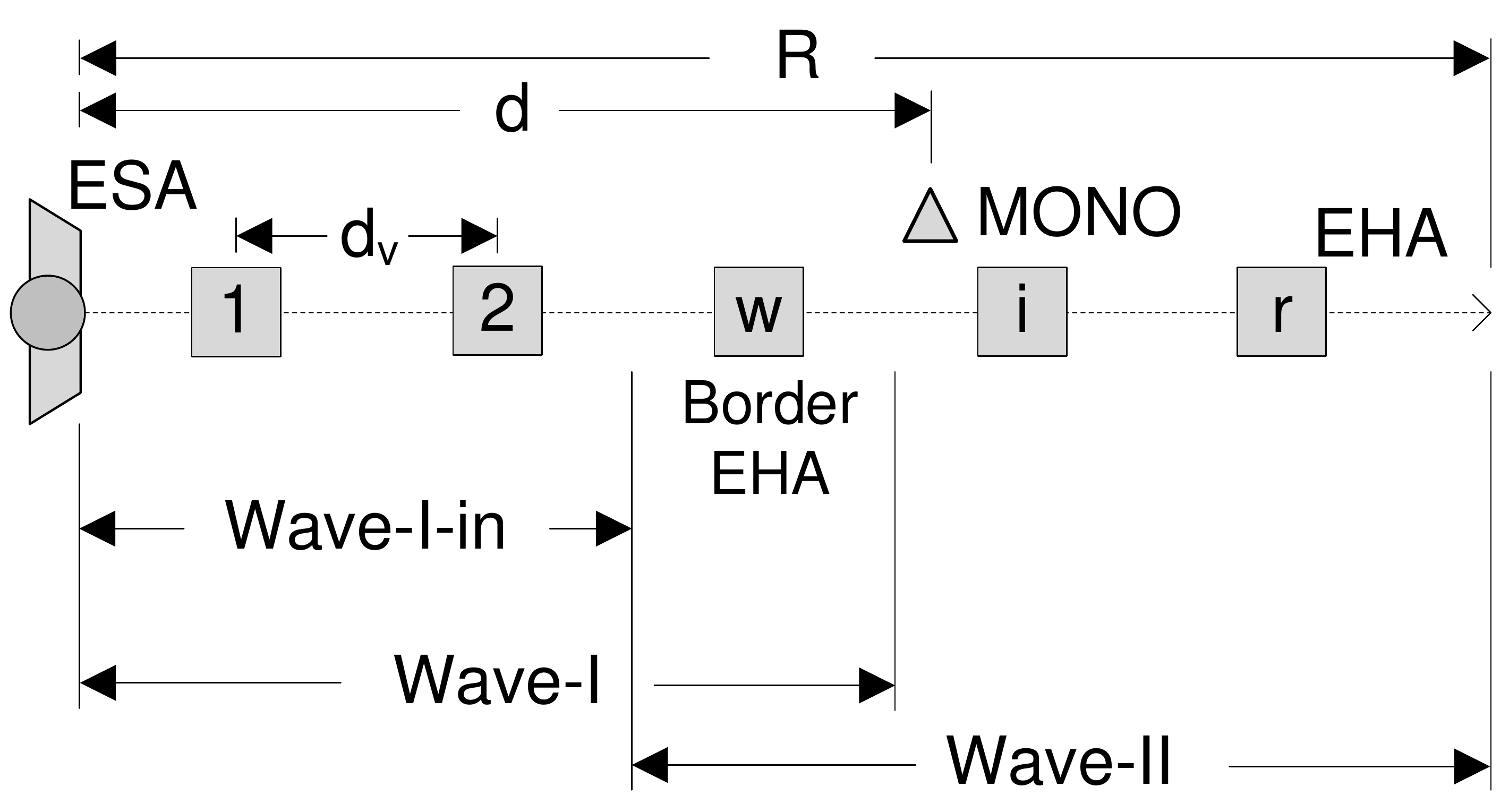}
\caption{Example deployment of two wave beacons. {\anbe} $w$ is selected as the border {\anbe}.}
\label{fig:analysis_setup_twowave}
\end{figure}

The \emph{two wave beacons} approach includes two rounds, i.e. waves, of beacon communication between {\mono}, {\charbe} and {\anbe}s. 
Each wave is similar to the the workflow of Sec.\ref{sec:distance_estimation}. 
The key difference is which {\anbe}s are selected to receive \texttt{sleep} signal. 

We use Fig.\ref{fig:analysis_setup_twowave} as an example of the system setup. 
The same as Sec.\ref{sec:distance_estimation}, {\anbe} measures the mean harvested power and sends this information to the {\charbe}. 
The {\charbe} selects the harvested power from one of the {\anbe}s as the power threshold $\theta$. 
Suppose the harvested power of {\anbe} $w$ is selected as the threshold value $\theta$. 
We call the {\anbe} $w$ as \emph{border {\anbe}}. 
We categorize the {\anbe}s into two groups by the border {\anbe}. 
The first group is the {\anbe}s with $\mu \ge \theta$, which is called Wave-I. 
The second group is the {\anbe}s with $\mu \le \theta$, which is called Wave-II. 
To simplify the explanation, we name the area of Wave-I except the cell of border {\anbe} as Wave-I-in. 
The procedure of \emph{two wave beacons} is as follow, and its sequential diagram is shown in Fig.~\ref{fig:analysis_setup_twowave_seq_diagram}. 

\begin{enumerate}[(i)]
\item \textbf{Step I $-$ First Wave Beacon}: 
{\charbe} wakes up {\anbe}s in Wave-I to send beacon messages to {\mono}. In this step, there are two possible conditions. 
\begin{itemize}
\item 
Suppose the {\mono} is in Wave-I-in. 
The {\mono} will receive the \texttt{beacon-reply} from one of the {\anbe}s in Wave-I-in. 
This beacon packet is decoded for its information and counted as the proximity detection result. 
In the example of Fig.~\ref{fig:analysis_setup_twowave}, the {\mono} will receive the beacon from {\anbe} 1 or {\anbe} 2. 
After that, the procedure of the \emph{two wave beacons} approach finishes. 
\item Suppose the {\mono} is in Wave-II. 
The border {\anbe} always has the closest distance to the {\mono}. 
According to the measurement results of collision based beacon in Sec.\ref{sec:collision_beacon}, 
{\mono} receives the beacon message from the nearest {\anbe}. 
So that, the {\mono} will always decode the beacon packet from the border {\anbe}. 
In this situation, we cannot confirm the position of the {\mono}. 
The {\mono} initiates the second wave by re-sending \texttt{wakeup-request} packet with ``\textit{request-again}'' in the payload. 
\end{itemize}

\item \textbf{Step II $-$ Second Wave Beacon}: 
After the operations of the first wave beacon, we determine that the {\mono} must be inside the Wave-II, although its position is unclear. 
In the second wave beacon, {\charbe} wakes up {\anbe}s in Wave-II to send beacon messages to {\mono}. 
Then the {\mono} will receive and decode the \texttt{beacon-reply} from one of the {\anbe}s in Wave-II. 
This beacon packet is decoded for its information and counted as the proximity detection result. 
The procedure of the two wave beacons finishes.
\end{enumerate}

\begin{figure}
\centering
\includegraphics[width=0.98\columnwidth]{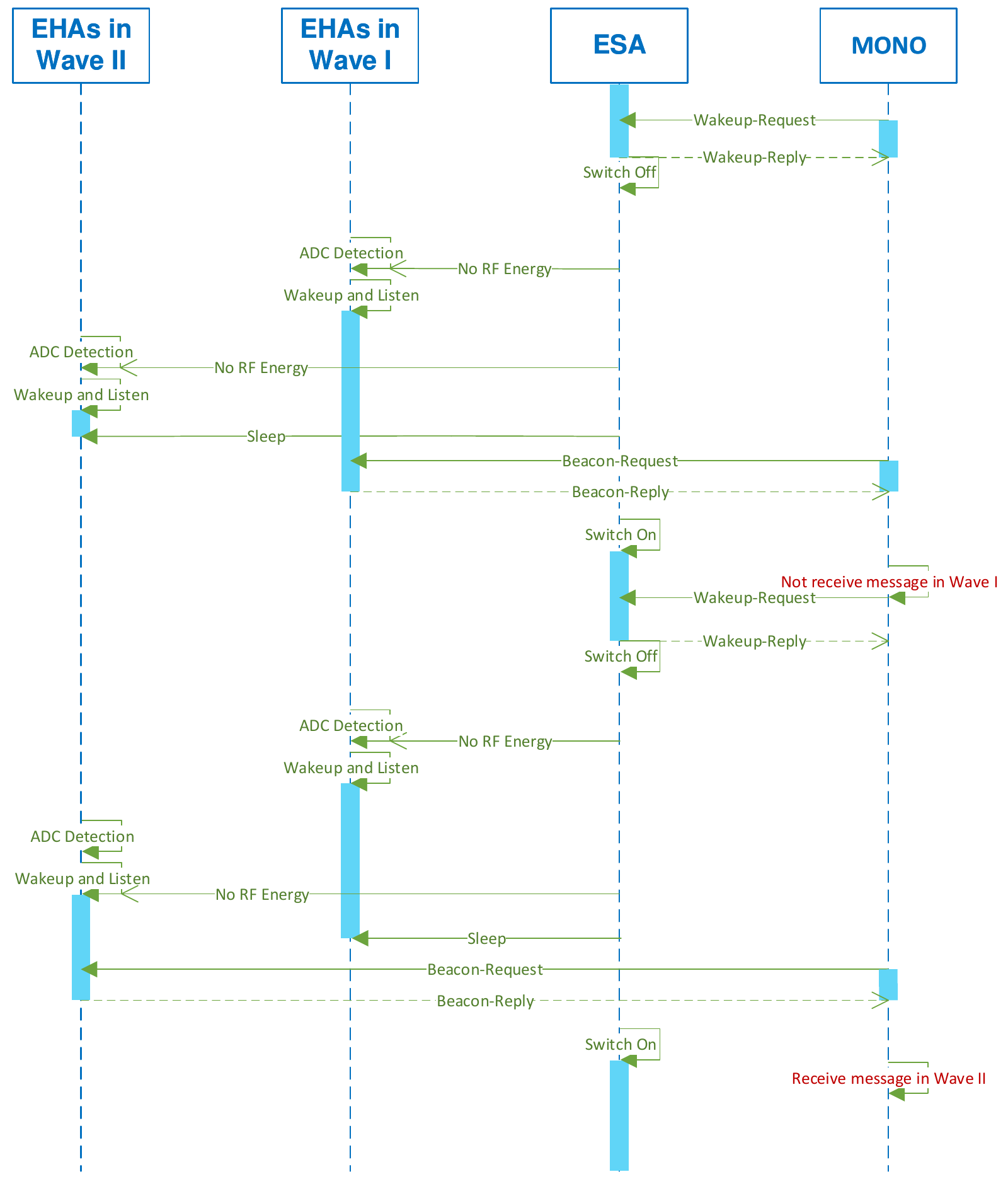}
\caption{Sequential diagram of two wave beacons. In this diagram, we assume the {\mono} is inside Wave-II area. }
\label{fig:analysis_setup_twowave_seq_diagram}
\end{figure}

\subsubsection{Optimization on Two Wave Beacons}
\label{sec:wpt_analysis}

Name the time length that all the required {\anbe}s harvest enough energy for a round of beacon operation as charging period (CP). 
For two wave beacons, 
the CP of Wave-I equals the charging time of the border {\anbe}, and the CP of Wave-II equals the charging time of the furthest {\anbe}. 
To understand how two wave beacons improve the energy utilization efficiency, we analyze the tradeoff between CP and PDA in this section. 
We first analyze two wave beacons in one dimension, and then explain the steps to extend the analysis to two dimensions.

\paragraph{Nodes Deployment}

We deploy \charbe, \mono, and {\anbe}s in a line. 
The deployment is shown in Fig.~\ref{fig:analysis_setup_twowave}. 
The {\charbe} is located at the beginning of the line. 
The {\mono} is randomly and uniformly distributed, such that \mono's distance $d$ from \charbe~is $d\sim\mathcal{U}(0,R)$, and $R$ is the maximum separation between \charbe~and \mono. 
The {\anbe} which is the closest to the {\charbe} is $d_v/2$ from the {\charbe}. 
Each \anbe~at location $i\in\{1,2,\ldots,r\}$ is positioned with a regular interval $d_v$, such that $d_vr=R$. 
Then the probability that the {\mono} is inside a cell is $\Omega = \frac{d_v}{R} = \frac{1}{r}$. 
The position of \anbe~$i$ is $d_i$. 
Assume each \anbe~$i$ has its beacon communication range, i.e. from $S^{(i)}_{\min}$, located at the middle distance between \anbe~ $i-1$ and \anbe~$i$, to $S^{(i)}_{\max}$ located at the middle distance between \anbe~$i$ and \anbe~$i+1$ (for $i=1$: position of \charbe). Finally, all \anbe s are divided into two groups, denoted as (i) first wave beacon group---with \anbe~$\{1,2,\ldots,w\}$, and (ii) second wave beacon group---with \anbe~$\{w,w+1,\ldots,r\}$. 
\anbe~$w$ is a border {\anbe}, which belongs to both groups. 

\begin{figure}
\centering
\subfigure[4 {\anbe}s.]{\includegraphics[height=0.40\columnwidth]{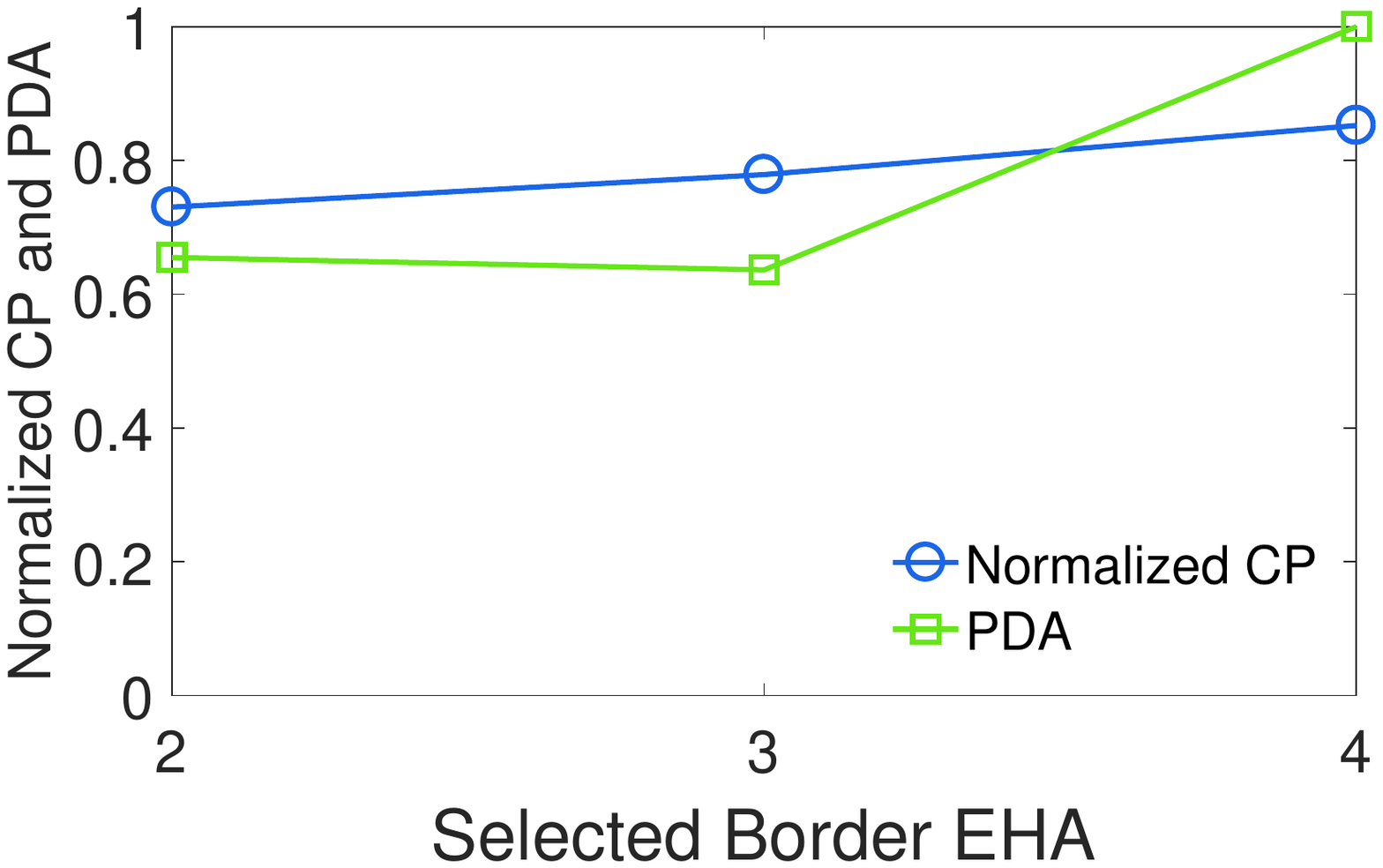}\label{sim:two_wave_line_4}}
\subfigure[10 {\anbe}s.]{\includegraphics[height=0.40\columnwidth]{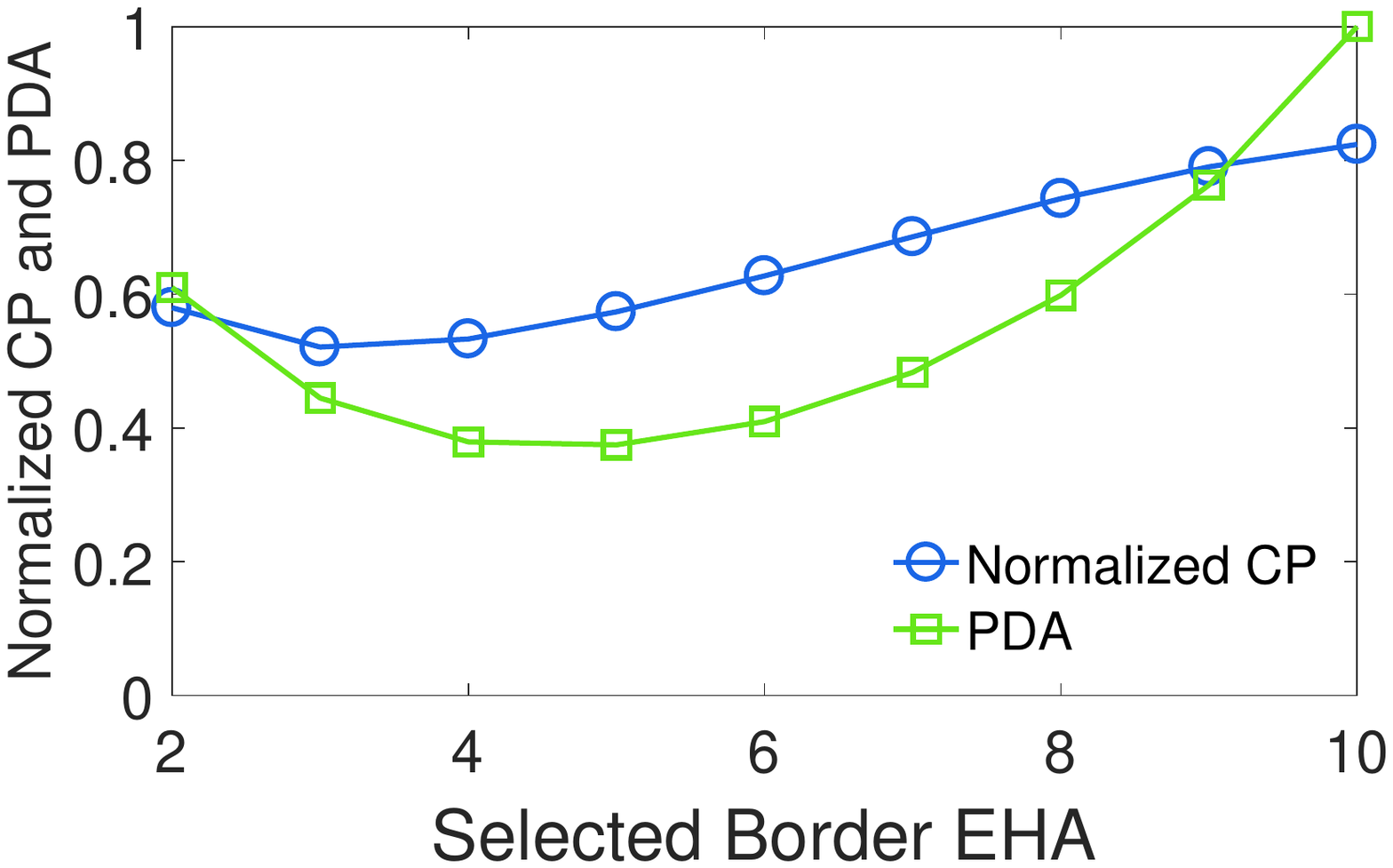}\label{sim:two_wave_line_10}}
\caption{Analysis of two wave beacons in one dimension. Each value of CP is normalized to the maximum CP in 4 {\anbe}s and 10 {\anbe}s respectively. 
The coordinate with label ``Selected Border {\anbe}'' represents the ID of border {\anbe}. 
The selected parameters: number of \anbe s (a) $r=4$, (b) $r=10$; \anbe~separation $d_v=1$; \anbe~transmission range $\forall i, S_{\min}^{(i)}=(i-1)d_v, S_{\max}^{(i)}=id_v$; log-normal shadowing parameter $\sigma=7$ (indoor environment); pathloss exponent $n=3$; reference distance $d_0=1$.}
\label{fig:simulation_two_wave_localization_line}
\end{figure}

\paragraph{Radio Propagation}

Assume the radio propagation follows a pathloss and log-normal shadowing model in which pathloss exponent is denoted as $n$, and log-normal random value $X\sim\mathcal{N}(0,\sigma)$. Denote $\varphi_y \triangleq P_t - L({d_0}) - 10n\log (y/{d_0})$ (dB) as the mean received signal strength at distance $y$, where $P_{t}$ is the \anbe~transmitted power, $L(d_0)$ is the (known in advance) reference pathloss at a reference distance $d_0\ll y$. The communication signal strength from \anbe~$i$ to \mono~is denoted as $P_{i} \sim N({\varphi _{|d-d_i|}},{\sigma ^2})$. Then $\Pr[{P_i} \geqslant {P_j}] \sim \mathcal{N}(\varphi_{|d-d_i|} - \varphi_{|d-d_j|},2{\sigma ^2})$, where $i>j$. Assume that $E_r$ is the energy required to perform one round of beacon by \anbe. Then $C(i)$ is the charging time that the {\anbe} $i$ needs to wait until it has enough energy for responding the next round of beacon request. Then $C_i=E_r/P_i$. 

\paragraph{Beacon Analysis}

Based on the above model, we are now ready to introduce the following proposition in one dimension of {\anbe}s. 
For two dimensional area, the analysis is the same except two points. Firstly, replace the spatial distribution $\Omega$ of {\mono} in the two dimensional area. Secondly, change the \emph{border {\anbe}} from one node $w$ to the set of border {\anbe}s. 

\begin{proposition}
The approach of two wave beacons sacrifices proximity detection accuracy of {\mono} for decreasing charging period of {\anbe}s. 
\end{proposition}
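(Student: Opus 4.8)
The plan is to treat both performance metrics as functions of the single design parameter --- the border {\anbe} index $w$ --- and to prove the two halves of the tradeoff separately against the ``single wave'' baseline in which all $r$ {\anbe}s respond together (recovered by $w=r$). First I would fix notation: write $C_i = E_r/P_i$ for the charging time of {\anbe} $i$, note that the harvested power decreases in $d_i$ so that $C_1 < C_2 < \cdots < C_r$, and recall that the {\mono} cell is uniform with $\Pr[\text{cell } i] = \Omega = 1/r$. I then need to show (a) that introducing a border $w<r$ strictly lowers the expected charging period, and (b) that the same move strictly lowers the proximity detection accuracy, so that the two cannot be improved simultaneously.

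For the charging-period half I would condition on the {\mono}'s cell. When the {\mono} lies in Wave-I-in (cells $1,\ldots,w-1$) the exchange completes in the first wave, whose binding charging time is that of the border {\anbe}, namely $C_w$; the furthest {\anbe}s $\{w+1,\ldots,r\}$ are never queried and keep charging. Only when the {\mono} lies in Wave-II is the furthest {\anbe}, with the longest time $C_r$, eventually required. Averaging over the uniform cell distribution then gives an expected charging period that drops below the single-wave value $C_r$ whenever the steep pathloss makes $C_w\ll C_r$ and the Wave-I-in mass is appreciable. This direction is essentially a weighted-average argument and should be routine once the per-wave binding times are pinned down.

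The detection half is the substance of the proof. Using the capture model I would write the reported cell as the strongest-signal {\anbe} within whichever wave is active, with the border {\anbe} acting as the switch that decides whether a second wave is triggered. Through $P_i \sim \mathcal{N}(\varphi_{|d-d_i|},\sigma^2)$ and $\Pr[P_i\ge P_j]\sim\mathcal{N}(\varphi_{|d-d_i|}-\varphi_{|d-d_j|},2\sigma^2)$ I would isolate the two border-induced error events that are absent from the single-wave scheme: (a) a {\mono} in Wave-I-in whose fading lifts the border {\anbe} above its true nearest {\anbe}, forcing a spurious second wave in which the true nearest never replies, a guaranteed misdetection; and (b) a {\mono} in Wave-II whose fading lets a non-border Wave-I {\anbe} win the first wave, so that no second wave is requested and the wrong cell is reported. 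I would then express $\mathrm{PDA}(w)$ as the integral over $d\sim\mathcal{U}(0,R)$ of the correct-capture probability minus the mass of these two leakage events.

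The main obstacle will be showing that this leakage strictly grows as $w$ is pulled inward to cut the charging period, net of the competing benefit that a smaller active wave contains fewer interferers and hence a higher within-wave capture probability. Concretely, the delicate step is bounding the coupled Gaussian tail integrals so that the border-misclassification mass dominates the interferer-reduction gain, yielding $\mathrm{PDA}(w)\le\mathrm{PDA}(r)$ with the gap widening exactly as $\mathrm{E}[\mathrm{CP}(w)]$ shrinks. Proving this inequality in closed form for all admissible $(n,\sigma,r,d_v)$ --- rather than merely exhibiting it numerically as in Fig.~\ref{fig:simulation_two_wave_localization_line} --- is the hard part; I would likely settle for a local monotonicity argument in $w$ about the baseline and then invoke the stated reduction to two dimensions, replacing $\Omega$ by the planar cell probability and the single border {\anbe} by the border set, to close the claim.
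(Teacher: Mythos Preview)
Your plan is sound and in fact more careful than the paper's own argument, but it is considerably more elaborate than what the paper actually does. The paper does not attempt the delicate step you flag as the ``main obstacle.'' Instead it writes down closed-form expectations directly and arranges them so that both inequalities are immediate. Concretely, with $H_i^{(i)}=\int_{S_{\min}^{(i)}}^{S_{\max}^{(i)}}\prod_{j\ne i}\Pr[P_i\ge P_j]\,\mathrm{d}d$ the all-$r$ correct-capture mass for cell $i$, and $B_i^{(w)}=\int_{S_{\min}^{(i)}}^{S_{\max}^{(i)}}\prod_{j<w}\Pr[P_w\ge P_j]\,\mathrm{d}d$ the mass of ``border wins Wave-I when {\mono} is in cell $i$,'' the paper sets
\[
E_a^a=\Omega\sum_{i=1}^r H_i^{(i)},\qquad
E_a^w=\Omega\sum_{i=1}^{w-1}H_i^{(i)}+\Omega\sum_{i=w}^{r}B_i^{(w)}H_i^{(i)},
\]
and
\[
E_c^a=C(r),\qquad
E_c^w=\Omega\, C(w)\sum_{i=1}^{r}\bigl(1-B_i^{(w)}\bigr)+\Omega\, C(r)\sum_{i=1}^{r}B_i^{(w)}.
\]
Since $B_i^{(w)}\le 1$ termwise, $E_a^w\le E_a^a$; since $C(w)<C(r)$ by pathloss monotonicity and $E_c^w$ is a convex combination of $C(w)$ and $C(r)$, $E_c^w<E_c^a$. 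That is the entire proof.

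The key difference from your proposal is that the paper \emph{reuses the same} $H_i^{(i)}$ (computed against all $r$ competitors) for the within-wave detection probability in the two-wave scheme. This modeling choice simply suppresses the ``competing benefit'' you worry about---fewer interferers in a smaller wave---so the coupled Gaussian-tail comparison you anticipate never arises; the PDA inequality is purely multiplicative. Your decomposition into the two border-induced leakage events is more faithful to the protocol and would yield a sharper statement, but the paper is content with the coarse product bound and does not attempt any monotonicity-in-$w$ analysis beyond the numerical Fig.~\ref{fig:simulation_two_wave_localization_line}. If you want to match the paper, drop the tail-integral program and just exhibit the two expectation formulas with the $B_i^{(w)}$ factor.
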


\begin{proof} 
Denote $K_i^{(j)}$ as the event that \mono~is in cell $i$ and the proximity detection results is in cell $j$. Suppose we only consider the capture effect in collision based beacon, with all $r$ {\anbe}s involved, the correct probability to decode beacon is calculated as
\begin{align}
K_i^{(j=i)} & = \max P_{i,\forall i \in \{1,r\}}\nonumber\\
 & = \Pr [{P_i} \geqslant \max({P_j}|j \in \{1,r\},j \ne i)].
\label{eq:K_i}
\end{align}

To simplify the calculation of (\ref{eq:K_i}) we use $K_i^{(j=i)}=\prod\nolimits_{j = 1, j \ne i}^r {\Pr[{P_i} \geqslant {P_j}]}$. 
From (\ref{eq:K_i}), the expected probability that \mono~is inside the correct cell $i$ is $H_i^{(i)} = \int_{S_{\min}^{(i)}}^{S_{\max}^{(i)}} K_i^{(j=i)} \mathrm{d}d$.

Now, if the first wave beacon group of the {\anbe}s is used, following the same analysis process, the event that \mono~is in cell $i$ while it is detected in cell $w$ is $K_i^{(j=w)} = \prod\nolimits_{j = 1}^{w-1} {\Pr[{P_w} \geqslant {P_j}]}$. The expected probability that \mono~is inside cell $i$ and it is detected in cell $w$ is $B_i^{(w)} = \int_{S_{\min }^{(i)}}^{S_{\max }^{(i)}} {K_i^{(j=w)}}\mathrm{d}d$.

If all {\anbe}s are woken up at the same time to beacon, then the expectation value of PDA and CP are
\begin{equation}
E_{a}^{a} = \Omega \sum\limits_{i = 1}^r H_i^{(i)}, 
\end{equation}
\begin{equation}
E_{c}^{a} = C(r). 
\end{equation}

If the two wave beacons is used, then the expectation value of PDA and CP are
\begin{equation}
E_{a}^{w} = \Omega \sum\limits_{i = 1}^{w-1} H_i^{(i)} + \Omega \sum\limits_{i = w}^r B_i^{(w)}H_i^{(i)},
\end{equation}
\begin{equation}
E_{c}^{w} = \Omega C(w)\sum\limits_{i = 1}^{r} (1 - B_i^{(w)}) + \Omega C(r)\sum\limits_{i = 1}^r B_i^{(w)}. 
\end{equation}

Due to the exponential decay of RF signal, we have $C(w) < C(r)$. 
Based on the above equations, it is easy to see that $E_{a}^{w} < E_{a}^{a}$ and $E_{c}^{w} < E_{c}^{a}$, which concludes the proof. 
\end{proof}

%---------------Alg 1------------------
\begin{algorithm}[t]
\footnotesize
\caption{\small {\name} using \emph{Range Estimation}}
\scriptsize
\textbf{$\bullet$ {\charbe}}:
\begin{algorithmic} [1]
\label{alg2:1}
\Loop 
	\State Categorize {\anbe}s into two ranges $\Delta_{c}$ and $\Delta_{f}$. \Comment{See Sec.\ref{sec:distance_estimation}}
	\If{\texttt{wakeup-request} received}
		\State Send \texttt{wakeup-reply}
		\State Switch OFF and ON {\charbe}-Transmitter as passive wakeup signal.
		\State Evaluate $\xi$ in $\Delta_{c}$ or $\Delta_{f}$. 
		\If {\texttt{wakeup-request} has NO ``\textit{request-again}''}
			\State Broadcast \texttt{sleep} command to {\anbe}s in $\Delta$. 
		\Else
			\State Broadcast \texttt{sleep} command to {\anbe}s in $\overline \Delta$. 
		\EndIf
	\EndIf
\EndLoop 
\end{algorithmic}

\textbf{$\bullet$ {\anbe}}:
\begin{algorithmic} [1]
\label{alg2:3}
\Loop 
	\State Wakeup from sleep to monitor $D_{\text{out}}$ every $t_c$.
	\If{passive wakeup signal detected} \Comment{See Sec.\ref{sec:passive_wakeup}}
		\State Start receiving. 
		\If{\texttt{Sleep} command received}
			\State Goto Sleep.
		\EndIf
		\If{\texttt{beacon-request} received} \Comment{See Sec.\ref{sec:collision_beacon}}
			\State Send \texttt{beacon-reply}.
		\EndIf
	\EndIf
\EndLoop 
\end{algorithmic}
\textbf{$\bullet$ {\mono}}:
\begin{algorithmic} [1]
\label{alg2:2}
\Loop 
	\If{timer $\geq t_m$}
		\State Broadcast \texttt{wakeup-request}.
		\If{\texttt{wakeup-reply} received}
			\State Wait for $t_c$. \Comment{See Sec.\ref{sec:opt_passive_wakeup}}
			\State Broadcast \texttt{beacon-request} to {\anbe}s.
			\If{\texttt{beacon-reply} received from {\anbe}s}
				\State Decode information in \texttt{beacon-reply}.
				\State Finish this round of beacon.
			\Else
				\State Broadcast \texttt{wakeup-request} with ``\textit{request-again}''.
			\EndIf				
		\EndIf
	\EndIf
\EndLoop 
\end{algorithmic}
\label{alg:wiploc_localization_protocol}
\end{algorithm}

%---------------Alg 2------------------
\begin{algorithm}[t]
\footnotesize
\caption{\small {\name} using \emph{Two Wave Beacons}}
\scriptsize
\textbf{$\bullet$ {\charbe}}:
\begin{algorithmic} [1]
\label{alg2:1_2}
\Loop 
	\State Categorize {\anbe}s into two waves Wave-I and Wave-II. \Comment{See Sec.\ref{sec:two_wave_wakeup}}
	\If{\texttt{wakeup-request} received}
		\State Send \texttt{wakeup-reply}
		\State Switch OFF and ON {\charbe}-Transmitter as passive wakeup signal.
		\If {\texttt{wakeup-request} has NO ``\textit{request-again}''}
			\State Broadcast \texttt{sleep} command to {\anbe}s in Wave-I. 
		\Else
			\State Broadcast \texttt{sleep} command to {\anbe}s in Wave-II. 
		\EndIf
	\EndIf
\EndLoop 
\end{algorithmic}
\textbf{$\bullet$ {\anbe}}:
\begin{algorithmic} [1]
\label{alg2:3_2}
\Loop 
	\State Wakeup from sleep to monitor $D_{\text{out}}$ every $t_c$.
	\If{passive wakeup signal detected} \Comment{See Sec.\ref{sec:passive_wakeup}}
		\State Start receiving. 
		\If{\texttt{Sleep} command received}
			\State Goto Sleep.
		\EndIf
		\If{\texttt{beacon-request} received} \Comment{See Sec.\ref{sec:collision_beacon}}
			\State Send \texttt{beacon-reply}.
		\EndIf
	\EndIf
\EndLoop 
\end{algorithmic}
\textbf{$\bullet$ {\mono}}:
\begin{algorithmic} [1]
\label{alg2:2_2}
\Loop 
	\If{timer $\geq t_m$}
		\State Broadcast \texttt{wakeup-request}.
		\If{\texttt{wakeup-reply} received}
			\State Wait for $t_c$. \Comment{See Sec.\ref{sec:opt_passive_wakeup}}
			\State Broadcast \texttt{beacon-request} to {\anbe}s.
			\If{\texttt{beacon-reply} received from {\anbe}s}
				\State Decode information in \texttt{beacon-reply}.
				\If {\texttt{beacon-reply} is from border {\anbe}}
					\If {This round of beacon is Wave-I} 
						\State Broadcast \texttt{beacon-request} with ``\textit{request-again}''.
					\Else
						\State Finish this round of beacon. 
					\EndIf
				\Else
					\State Finish this round of beacon. 
				\EndIf
			\Else
				\State Broadcast \texttt{wakeup-request} with ``\textit{request-again}''.
			\EndIf
		\EndIf
	\EndIf
\EndLoop 
\end{algorithmic}
\label{alg:wiploc_localization_protocol_2}
\end{algorithm}

We confirm this conclusion by the calculation of 4 {\anbe}s and 10 {\anbe}s respectively. The calculation results are shown in Fig.~\ref{fig:simulation_two_wave_localization_line}. 
If the border {\anbe} ID equals the total number of {\anbe}s (border {\anbe} = 4 in Fig.\ref{sim:two_wave_line_4}, border {\anbe} = 10 in Fig.\ref{sim:two_wave_line_10}), we do not use two wave beacons but wake up all {\anbe}s at once. 
Compared with the approach to wakeup all {\anbe}s, \emph{two wave beacons} approach loses some accuracy, however its average charging period decreases. 
We find that there is the minimum PDA, where border {\anbe} = 3 in Fig.\ref{sim:two_wave_line_4} and border {\anbe} = 5 in Fig.\ref{sim:two_wave_line_10}. 
We should avoid selecting these border {\anbe} while using the \emph{two wave beacons} approach. 
We further verify the above conclusion using the real hardware experiments in Sec.~\ref{sec:results}. 

\section{System Implementation}
\label{sec:system_implementation}

In this section, we explain the implementation of software and hardware based on the components of Sec.\ref{sec:key_tech}. 

\subsection{Software Implementation}
\label{sec:software_implementation}

The main beacon packets in {\name} are \texttt{beacon-request}, \texttt{beacon-reply}, \texttt{wakeup-request}, \texttt{wakeup-reply}, and \texttt{sleep}. All these packets have a fixed payload length of 30 bytes. 
The \texttt{beacon-request} and \texttt{wakeup-request} packets have the same format. They contain the ID of {\mono} in the first two bytes of the payload. 
The \texttt{beacon-reply} and \texttt{wakeup-reply} packets have the same format. They contains the {\anbe} ID encoded by the FEC which is then encoded by the orthogonal codes in the payload. 
For comparison, we present the program flow using range estimation (Sec.\ref{sec:distance_estimation}) and two wave beacons (Sec.\ref{sec:two_wave_wakeup}) in Algorithm~\ref{alg:wiploc_localization_protocol} and Algorithm~\ref{alg:wiploc_localization_protocol_2} respectively. 

\subsection{Hardware Implementation}
\label{sec:hardware_implementation}

Three types of hardware devices are implemented, including {\charbe}, {\anbe}, and {\mono}. 
The electrical connections of the three nodes are given in Fig.~\ref{fig:schematic_complete}.

\subsubsection{\textbf{\charbe}}
{\charbe} consists of two parts {\charbe}-Transmitter and {\charbe}-Controller. {\charbe}-Transmitter is Powercast TX91501 Powercaster transmitter~\cite{tx91501_p2110}, which has an \ac{EIRP} of 3\,W and operates at a center frequency of 915\,MHz. 
{\charbe}-Controller is the nRF51822 SoC Smart Beacon Kit~\cite{nrf51822_beacon_product_brief} with integrated PCB antenna, which is connected with {\charbe}-Transmitter. 
A transistor is added to the power line of the {\charbe}-Transmitter, so that the nRF51822 can switch the power of {\charbe}-Transmitter on or off. {\charbe}-Controller transmission power is 4\,dBm. The schematic of {\charbe} is shown in Fig.~\ref{fig:charbe_schematic}. 
The {\charbe} has a fixed power supply. Therefore {\charbe}-Controller listens all the time to the communication channel. 

When a high voltage signal is given to the transistor, {\charbe}-Transmitter is turned on and starts transmitting energy. After receiving a \texttt{wakeup-request} signal, 
{\charbe}-Controller controls {\charbe}-Transmitter to send a passive wakeup signal. 
The nRF51822 mote of {\charbe}-Controller pulls the transistor low to turn off the {\charbe}-Transmitter, after which it pulls the transistor high to turn on the {\charbe}-Transmitter again. 
By this way all the nearby {\anbe}s will detect the passive wakeup signal.

\subsubsection{\textbf{\anbe}}
The Powercast P2110B power harverster~\cite{tx91501_p2110} is used for harvesting power for the nRF51822 BLE mote. 
The harvester antenna is a vertical polarized patch antenna with a 6.1\,dBi gain. 
The version of development board for the nRF51822 is the Smart Beacon Kit. 
To keep the power consumption at a minimum, the transmission power of {\anbe} is set to -20\,dBm, and all the peripherals of the BLE mote are turned off except for one hardware timer which is set to generate an interrupt periodically. 
The CPU is most of the time in wait for interrupt state, which we denote as the \emph{sleep} state. 

The schematic of {\anbe} is shown in Fig.\ref{fig:anbe_schematic}. 
As {\anbe} is battery-less, the nRF51822 chip receives power from the $V_{\text{out}}$ pin of the harvester. The other connections are used for the passive wakeup and implemented as follows. The $D_{\text{set}}$ and $D_{\text{out}}$ pins are used for obtaining the received signal strength indication (RSSI). 
The $D_{\text{out}}$ and the $D_{\text{set}}$ are connected to P0.01 and P0.02 of the nRF51822, respectively. 
When $D_{\text{set}}$ is pulled high, $D_{\text{out}}$ represents the RSSI of the harvester. When the $D_{\text{out}}$ voltage is lower than the threshold $V_{t}$, the nRF51822 is woken up and starts listening to incoming radio packets until a valid packet is received. When a timeout is reached, the nRF51822 goes back to sleep.

\subsubsection{\textbf{\mono}}
Compared with the hardware and connections of {\anbe}, 
we only change the type of BLE board in {\mono} to increase the signal sensitivity. 
The BLE development board of {\mono} is nRF51822 nRFgo PCA10005~\cite{nrf51822_nRFgo}: an ARM Cortex M0 CPU with a helical monopole SMA-connected antenna. The transmission power is 4\,dBm. The schematic of \mono~is given in Fig.~\ref{fig:anbe_schematic}. 

\begin{figure}
	\centering
	\subfigure[{\charbe}.]{
		\includegraphics[width=0.6\columnwidth]{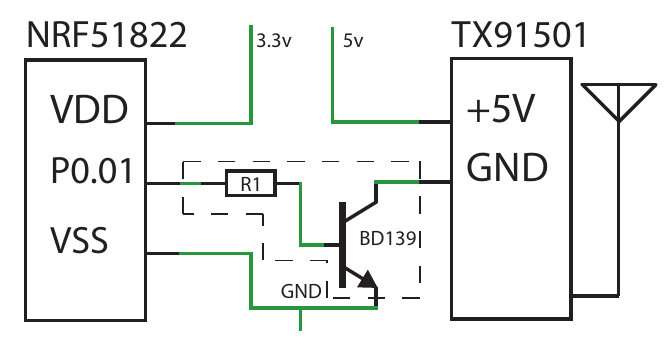}
		\label{fig:charbe_schematic}
	}	
	\subfigure[{\anbe} and {\mono}.]{
		\includegraphics[width=0.45\columnwidth]{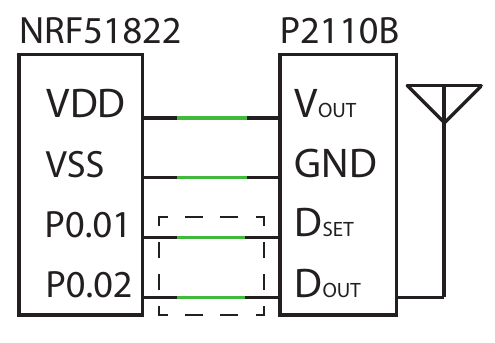}
		\label{fig:anbe_schematic}
	}
	\caption{Implementation of {\name} devices: (a) \charbe: Nordic Semiconductors nRF51822 SoC~\cite{nrf51822} connected to Powercast power transmitter~\cite{tx91501_p2110}. $R_1=330$\,$\Omega$; and (b) \anbe ~and \mono: nRF51822 connected to Powercast energy harvester P2110B~\cite{tx91501_p2110}. 
}
	\label{fig:schematic_complete}
\end{figure}

\section{Experimental Results}
\label{sec:results}

We evaluate the performance of {\name} from the following aspects. 
Firstly, we test whether the basic functions of {\name}, including BLE beacon and passive wakeup, can work properly using harvested RF energy (Sec.\ref{sec:benchmark_charging_time}). 
Secondly, we test the performance of \emph{collision based beacon} in an indoor area (Sec.\ref{sec:collision_beacon_room}). 
Thirdly, we test {\name} with increased deployment density of {\anbe}s. In this test, the components including \emph{collision based beacon}, \emph{passive wakeup}, and \emph{range estimation}. The operations of EHA and MONO completely rely on harvest RF energy (Sec.\ref{sec:test_high_density}). 
Finally, we test {\name} using \emph{two wave beacons} (Sec.\ref{sec:test_two_wave_beacons}), compared with the third experiment using \emph{range estimation}. 

\subsection{Simple Beacon with Passive Wakeup}
\label{sec:benchmark_charging_time}

To evaluate the performance of {\name}, it is necessary to know whether the implemented {\name} system is able to harvest enough energy for beacon communication from the RF energy transmitter. 
To achieve this aim, we measure the time length that {\anbe} needs to harvest RF energy for operating a beacon communication. 
In this experiment, 
we only test the basic beacon operation and passive wakeup with RF energy harvesting. 
There is not collision based beacon as in Sec.\ref{sec:collision_beacon}, optimized $t_c$ for minimizing average power consumption as in Sec.\ref{sec:opt_passive_wakeup}, range estimation as in Sec.\ref{sec:distance_estimation}, or two wave beacons as in Sec.\ref{sec:two_wave_wakeup}. 

\subsubsection{Experiment Setup}

\begin{figure}
\centering
\includegraphics[width=0.85\columnwidth]{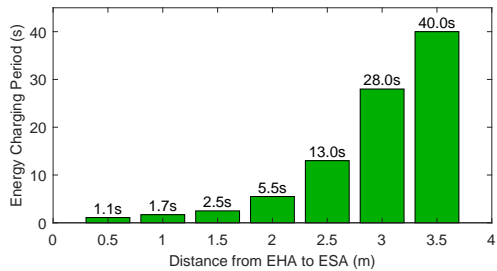}\label{exp:two_wave_ideal}
\caption{Relation between the distance from \anbe~to \charbe~and the best-effort energy charging period of {\anbe}. }
\label{fig:simple_beacon}
\end{figure}

One {\anbe} is used to send beacon messages. 
The test distance between {\charbe} and {\anbe} is from 0.5\,m to 3.5\,m with interval of 0.5\,m. 
A modified {\mono} is used in this test. 
The modified {\mono} connects to a personal computer, i.e. the harvester is removed while all the power is supported by the USB port of the PC. This modification is needed to keep the {\mono} in full-time listening mode. 
The {\anbe}, {\charbe}, and {\mono} are deployed in the communication range of each other. 

At each testing position, the {\mono} performs 100 \texttt{wakeup-request} with a fixed request period. This particular period is chosen starting at 500\,ms and increased with a step of 100\,ms. 
The {\anbe} is activated by passive wakeup signal of {\charbe} from sleeping mode. 
The {\anbe} sends a \texttt{beacon-reply} immediately after receiving the \texttt{beacon-request}. 
The \texttt{beacon-reply} packets from the {\anbe} are captured by the modified {\mono}. 
We measure the PRR. If the PRR is more than 95\%, we assume that the current energy charging period is long enough for the {\anbe} to finish a round of beacon operation at this particular distance. 

\subsubsection{Experiment Results}

The result of this experiment is presented in Fig.~\ref{fig:simple_beacon}. As expected the energy charging period increases as the distance increases. 
{\name} allows perpetual batteryless beacon using RF based energy harvesting, reaching beacon period of 40 s at a
3.5m distance between {\charbe} and {\anbe}. 
Further we infer that if one needs multiple {\anbe}s to work with the same period, all {\anbe}s need to wait as long as the charging period for the \anbe~that has the longest distance from {\charbe}. Therefore, the beacon response rate is low bounded by the furthest-located {\anbe}.

\subsection{Collision based Beacon}
\label{sec:collision_beacon_room}

In this experiment, we aim to evaluate the collision based beacon of Sec.\ref{sec:collision_beacon} in rooms. 
Therefore, we simplify workflow of {\name} to focus on collision beacons as follows. 
The {\charbe}-Transmitter is used for transmitting RF energy for the {\mono}. 
The passive wakeup in {\charbe}-Controller is not used. 
We remove the RF harvester of {\anbe} and connect to a wired power supply, so that the {\anbe} listens the communication channel in full time. If a \texttt{beacon-request} packet is received, the {\anbe} broadcasts \texttt{beacon-reply}. 
The {\mono} harvest RF energy as the power supply for beacon communication. 
It periodically wakes up from sleep to broadcasts \texttt{beacon-request}, and then its radio immediately switches to receiving mode. 
If a \texttt{beacon-reply} packet is received, it decode the message. 
After that, the {\mono} goes back to sleep mode. 

\subsubsection{Experiment Setup}

\begin{figure}  
\centering
\includegraphics[height=0.4\columnwidth]{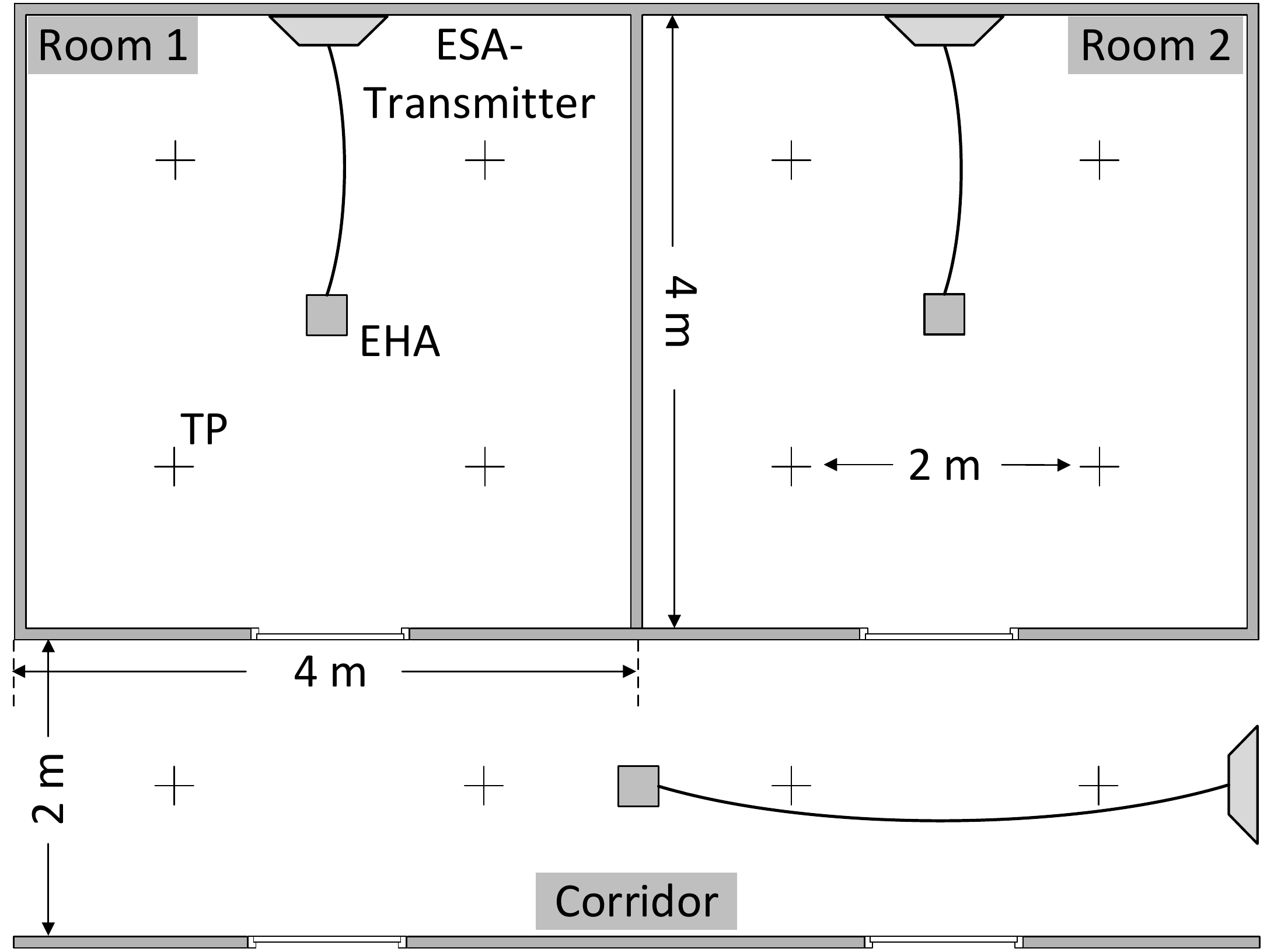}
\caption{Top view of \name~deployment in the office environment. {\anbe} has wired power supply connected to {\charbe}-Transmitter. The places marked as "$+$" are the testing positions (TP) of {\mono}.}
\label{fig:experiment_collision_beacon}
\end{figure}

\begin{table}[t]
	\centering
	\caption{Experiment result of collision based beacons}
	\label{tab:result1}
	\begin{tabular}{ c | c | c | c }
		\hline
		\textbf{Position of {\anbe}s} & \textbf{PRR (\%)} & \textbf{PDA (\%)} & \textbf{LE (m)}\\
		\hline
		\hline
		Room 1 and 2 & 95.5 & 99.3 & 0.03\\ \hline
		Room 1, 2 and Corridor & 84.7 & 89.6 & 0.25 \\ \hline
	\end{tabular}
\end{table}

The deployment of {\charbe}, {\anbe} and {\mono} is shown in Fig.\ref{fig:experiment_collision_beacon}. 
The room and corridor are divided into four cells of 2\,m$\times$2\,m respectively. 
The center of each cell is the test position of the {\mono}. 
Three {\anbe}s are deployed with wired power supply from {\charbe}-Transmitter in the center of room 1, room 2 and corridor respectively.  
The {\mono} is placed at every testing position. 
Every device is placed 1.0\,m above the floor and they are all in line-of-sight from each other. 

The {\mono} needs to consume extra energy for aggregating the testing results. 
As the {\mono} is powered wirelessly, consuming power for result data aggregation will negatively affect the performance of {\name}. 
We overcome this problem by using a BLE USB dongle as a sniffer~\cite{nrf51822}. 
This sniffer is deployed near the testing area, and it monitors all the packets sent by the {\mono}. 
In our experiment, the result of every beacon round is attached in the \texttt{beacon-request} packet of the next beacon round. 
The sniffer receives the data and saves it for further data processing.

\subsubsection{Experiment Results}

In the first round of experiment, we only deploy two {\anbe}s in the room 1 and 2. 
In the second round of experiment, we deploy all the three {\anbe}s in the rooms and corridor. 
At every test position, 50 beacon rounds were performed. 
For each test position, PRR, PDA and LE are computed and averaged.
The results are shown in Table~\ref{tab:result1}. 
We observe that the PRR and PDA decrease as the number of deployed {\anbe}s increases. 
This is mostly due the fact of multiple packets collision. 
In this experiment, we demonstrate two performance of {\name}. Firstly, we achieve batteryless {\mono} based on harvested RF energy. Secondly, the collision based beacon can achieve expected PRR and PDA, if we limit the number of {\anbe}s involved. 

\subsection{{\name} using Range Estimation in Cell Level {\anbe}s}
\label{sec:test_high_density}

\begin{figure}[t]
\centering
\includegraphics[height=0.75\columnwidth]{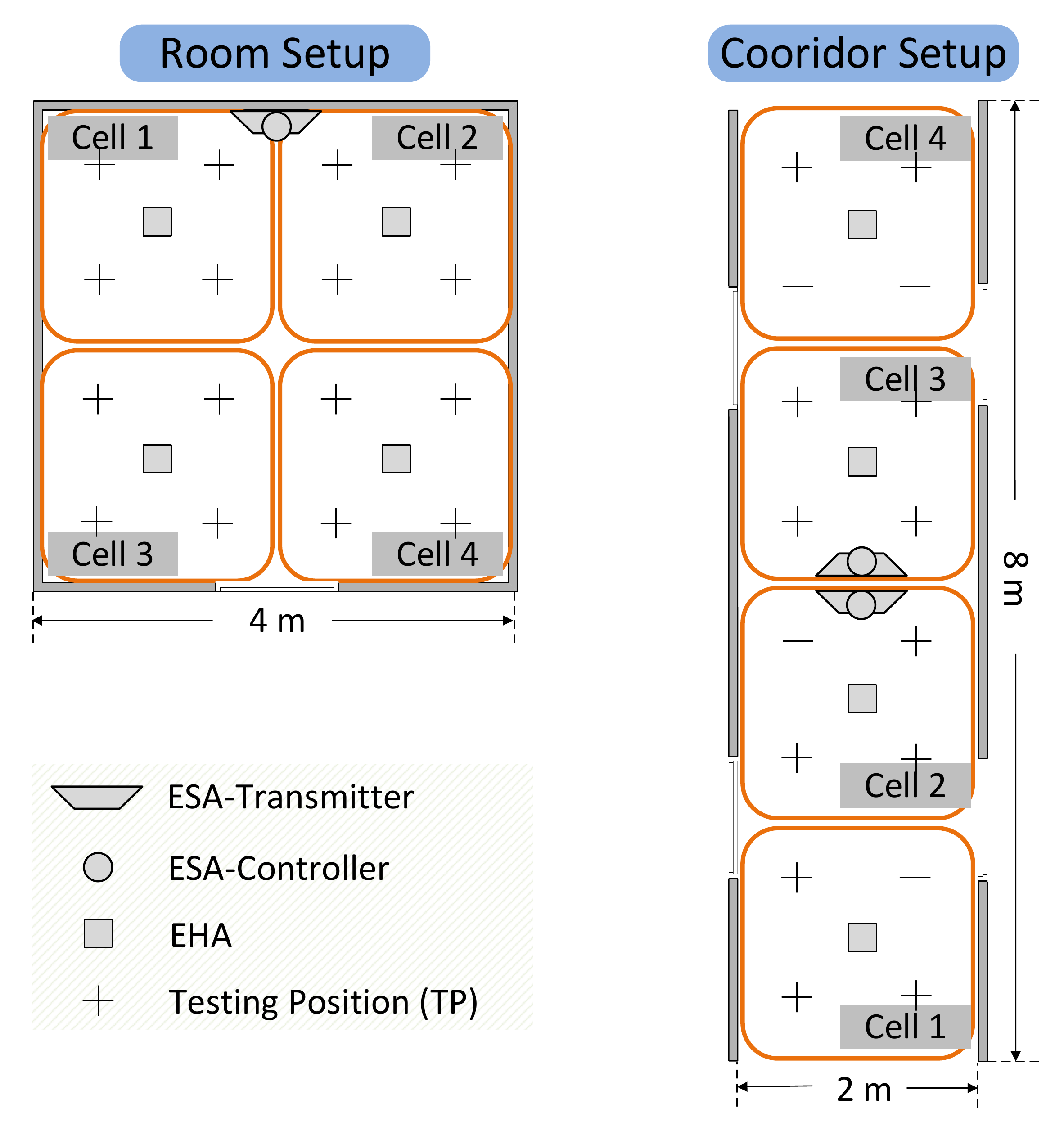}\label{fig:experiment_setup}
\caption{Experiment setup of the office and the corridor for cell-level testing of {\name} using range estimation. }
\label{fig:experiment_cell_level}
\end{figure}

In this experiment, we evaluate the performance of {\name} in cell-level of a real office area. 
We select the range estimation component, instead of the two wave beacons. 
The {\name} system is implemented under Alg.\ref{alg:wiploc_localization_protocol}. 
The deployment of {\anbe}s, {\charbe}s and the testing positions of {\mono} are illustrated in Fig.~\ref{fig:experiment_cell_level}. 

\subsubsection{Experiment Setup}

The experiments are performed in a room and a corridor respectively. 
Four {\anbe}s are deployed in each scenario. 
As the length of the corridor is longer than the effective energy transmitting range from {\charbe} to {\anbe}, we deploy two {\charbe}s back-to-back in the middle of the corridor pointing to the begin and the end of the corridor, respectively. 
Each {\charbe} in the corridor is responsible for waking up and controlling two {\anbe}s. 
We use the pre-measured harvested power value $\rho=3.7$\,dBm at the geographic middle position of the deployment area of {\anbe}s as the threshold value to categorize the two ranges. 
The {\mono} sends 20 beacon requests at each testing position. 
The same as in Sec.\ref{sec:collision_beacon_room}, 
an BLE packet sniffer was used for result data collection. 
The decoding for beacon packet is done by the {\mono}, so only the decoding result is transmitted to the sniffer. 

\subsubsection{Experiment Results}

\begin{table}[t]
	\centering
	\caption{Average experiment results in the room and the corridor of {\name} using range estimation.}
	\label{tab:experiment_cell_level}
	\begin{tabular}{ c | c | c | c  p{\columnwidth}}
	    \hline
		\textbf{Location} & \textbf{PRR (\%)} & \textbf{PDA (\%)} & \textbf{LE (m)}\\
		\hline
		\hline
		Room & 97.5 & 59.9 & 0.85 \\ \hline
		Corridor & 100 & 82.2 & 0.41 \\ \hline
	\end{tabular}
\end{table}

\begin{figure}[t]
 \centering	
 \subfigure[t][Cell-level testing results in the room.]{
 	\includegraphics[width=0.6\columnwidth]{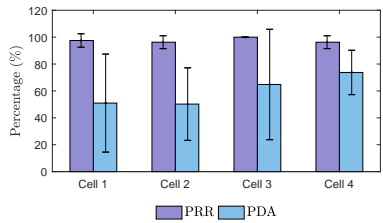}
 	\label{pic:cell_result_office}
 }
 \hspace{0.4cm}
 \subfigure[t][Cell-level testing results in the corridor.]{
 	\includegraphics[width=0.6\columnwidth]{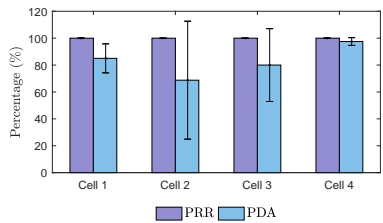}
  \label{pic:cell_result_corridor}
 }
 \subfigure[t][Localization error of cell-level testing results in the room.]{
 	\includegraphics[width=0.6\columnwidth]{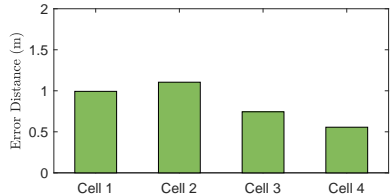}
 	\label{pic:cell_result_office_ed}
 }
 \hspace{0.4cm}
 \subfigure[t][Localization error of cell-level testing results in the corridor.]{
 	\includegraphics[width=0.6\columnwidth]{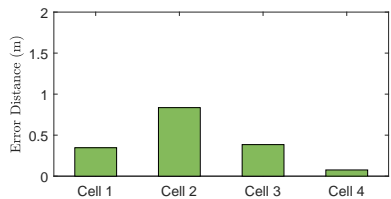}
  \label{pic:cell_result_corridor_ed}
 } 
 \caption{Experiment results of {\name} using range estimation in each cell of the room and the corridor.}
  \label{pic:cell_result_office_corridor}
\end{figure}

The average experiment results in the room and corridor are shown in Table~\ref{tab:experiment_cell_level}, and the results in each cell are shown in Fig.\ref{pic:cell_result_office_corridor}. 
Compared with the {\name} room-level experiment results as shown in Table~\ref{tab:result1}, the average PRR increases. 
This is because of the range estimation component, which selects only part of the {\anbe}s to send beacon signals in a beacon round. 
The cell-level accuracy of {\name} is lower than the room-level accuracy. 
We find that the cell-level accuracy at some cell positions is much lower than the others as shown in Fig.\ref{pic:cell_result_office_corridor}. 
This is due to two reasons: 

\begin{enumerate}[(i)]
\item The deployed cell area of each {\anbe} is only 4\,$\text{m}^\text{2}$ in this experiment, which is much smaller than the 16\,$\text{m}^\text{2}$ deployed room area in Sec.\ref{sec:collision_beacon_room}. 
The received beacon packets from {\anbe}s by the {\mono} only have small difference in RSS. This could cause error for decoding the packet from the closest {\anbe}. 
\item The radio pattern of {\charbe}-Transmitter has only 60$^\circ$ coverage in width and height, therefore some testing positions at the border of the room are not effectively covered. 
The {\mono} at these positions can not harvest enough energy and work properly. 
\end{enumerate}

The test results prove that {\name} is able to achieve cell-level beacon using harvested RF energy. 
The PRR is high enough for real applications. 
The PDA is low in some scenarios, which requires further improvement. 

\subsection{{\name} using Two Wave Beacons in Cell Level {\anbe}s}
\label{sec:test_two_wave_beacons}

In this experiment, we evaluate the performance of {\name} with two wave beacons component and validate that the two wave beacons mechanism improves beacon rate of {\anbe}s. 
We implement the {\name} system using Alg.\ref{alg:wiploc_localization_protocol_2}. 

\subsubsection{Experiment Setup}

\begin{figure}  
\centering
\subfigure[Top view of the deployment.]{\includegraphics[height=0.26\columnwidth]{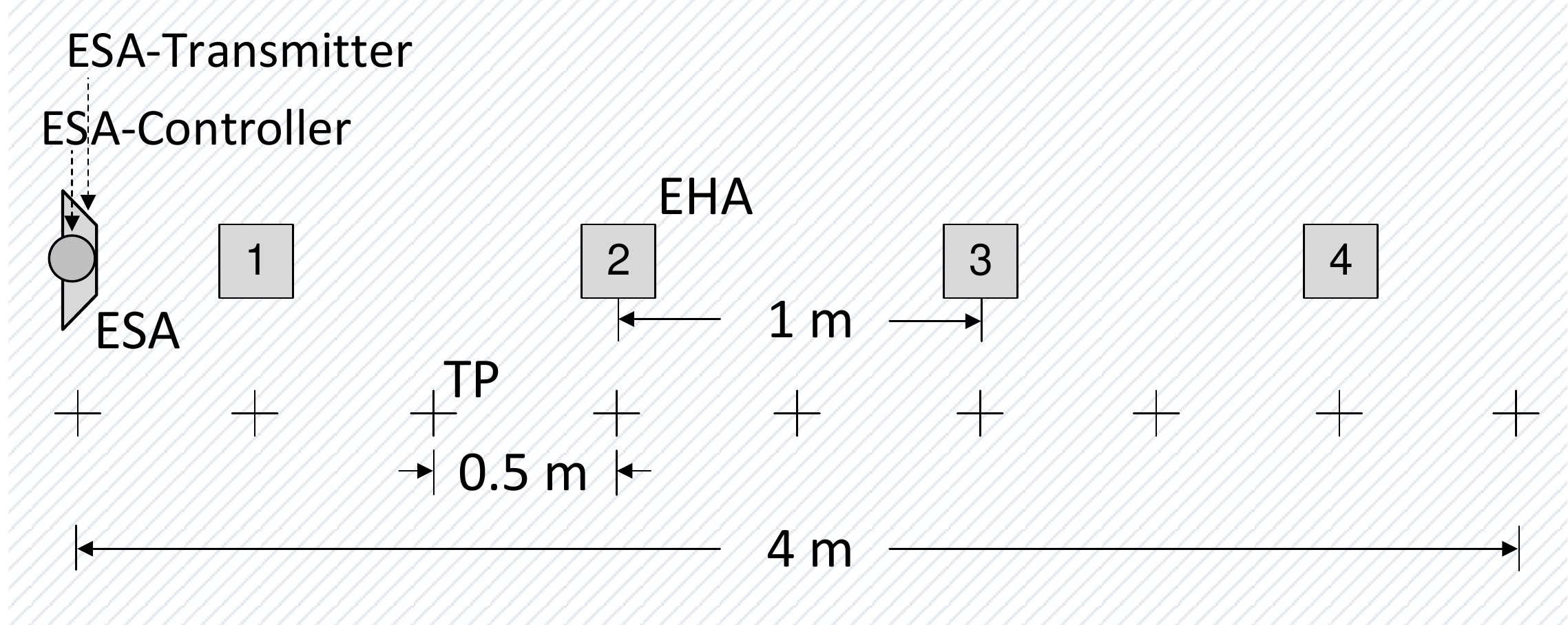}\label{fig:experiment_setup_cell}}
\hspace{0.1cm}
\subfigure[A picture of experiment setup in the corridor.]{\includegraphics[height=0.4\columnwidth]{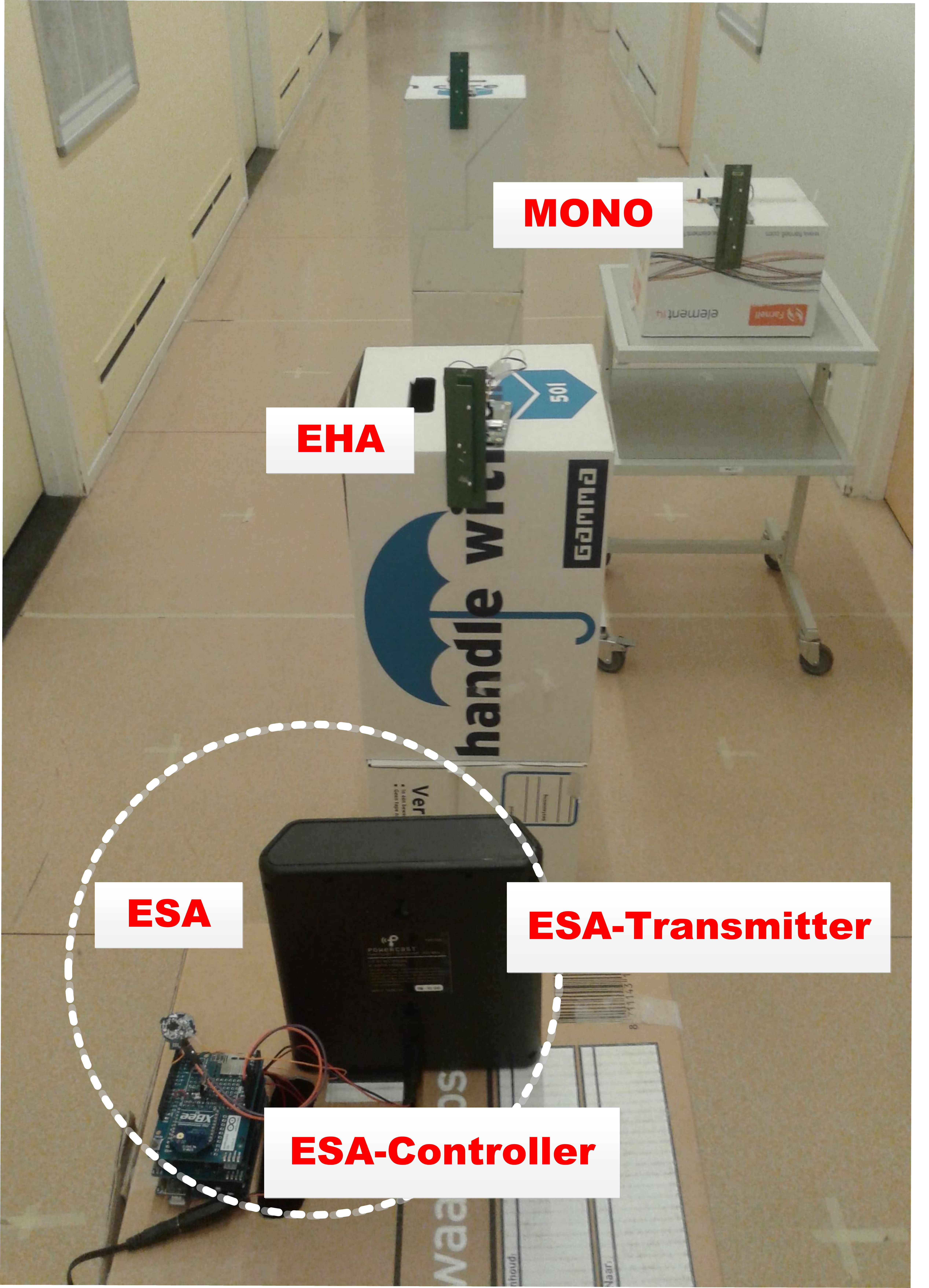}\label{fig:photo_setup}}
\caption{Experiment setup of {\name} using two wave beacons. The places marked as "$+$" are the testing positions (TP) of {\mono}.}
\label{fig:experiment_cell_level_twowave}
\end{figure}

We set up one {\charbe} and four {\anbe}s in a corridor, as depicted in Fig.~\ref{fig:experiment_cell_level_twowave}. 
Compared with the experiment in Sec.\ref{sec:test_high_density}, we deploy {\anbe}s closer to each other. 
The first {\anbe} is located at a 50\,cm distance from the \charbe~while the remaining three {\anbe}s are positioned at a 1\,m distance from the previous \anbe~on a straight line. The \mono~is located first at the same position as the \charbe~and moved away from it in steps of 50\,cm until a 4\,m distance is reached. At each {\mono}'s testing position 20 beacon events are performed. 
The procedure of packet decoding and results aggregation is the same as in Sec.\ref{sec:collision_beacon_room} and Sec.\ref{sec:test_high_density}. 
Three types of scenarios are tested for comparison: 

\begin{enumerate}
	\item \textbf{{\anbe} 2}: two wave beacons approach is enabled with border {\anbe} 2. 
	\item \textbf{{\anbe} 3}: two wave beacons approach is enabled with border {\anbe} 3. 
	\item \textbf{Wake all}: two wave beacons approach is disabled and all {\anbe}s broadcast \texttt{beacon-reply} after receiving \texttt{beacon-request}. 
\end{enumerate}

\subsubsection{Experiment Results}

The charging period of the {\anbe}s is defined and analyzed in Sec.\ref{sec:wpt_analysis}. 
The approach to measure the energy charging period of each {\anbe} is the same as in Sec.\ref{sec:benchmark_charging_time}. 
The experiment results on energy charging period of two wave beacons are shown in Fig.~\ref{fig:2wave_energy_harvesting_time}. 
Compared with the case when all {\anbe}s are woken up, the testing results with two wave beacons have smaller charging period. 
The average charging period with border {\anbe} 2 is around 47\% of ``wake all'' scenario. 
These results validate that the charging period of {\anbe}s can be reduced by two wave beacons approach. 
While decreasing the energy charging period, the experiment results in Fig.\ref{fig:2wave_prr_accuracy} illustrate the increasing of PRR and decreasing of PDA. 
The tradeoff between charging period and PDA aligns with the analysis results shown in Sec~\ref{sec:wpt_analysis}. 
Compared with the two wave beacons, the PRR in the scenario of ``wake all'' is lower. 
The main reason is that there are more collisions between the packets of {\anbe}s in the scenario of ``wake all''. 
Therefore, the probability that packets are corrupted increases. 
Meanwhile, compared with the scenario of ``wake all'', the accuracy of two wave beacons decreases. 
We analyze the testing results and find the following reason. 
In the first wave of beacons, while the {\mono} is outside the area of Wave-I, the {\mono} can not always receive and decode the ID of the border {\anbe}. 
This error causes that the second wave of beacons can not be activated. 

Overall we conclude that two wave beacons approach is able to decrease the average energy charging period of {\anbe}s and increase the PRR, while sacrificing some PDA. 

\begin{figure}[t]
\centering
\subfigure[Average energy charging period.]{\includegraphics[width=0.7\columnwidth]{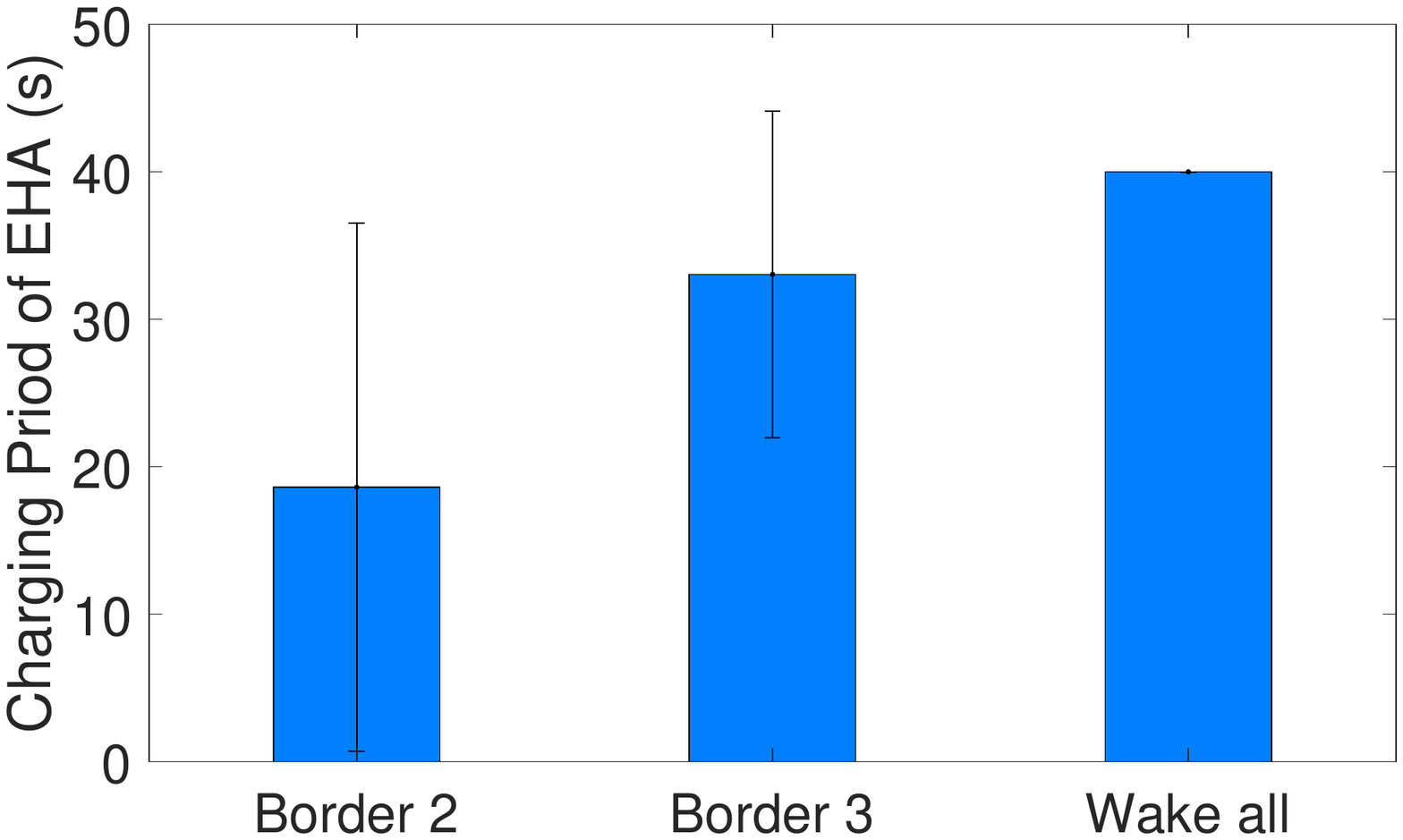}\label{fig:2wave_energy_harvesting_time}}

\subfigure[Average PRR and PDA.]{\includegraphics[width=0.7\columnwidth]{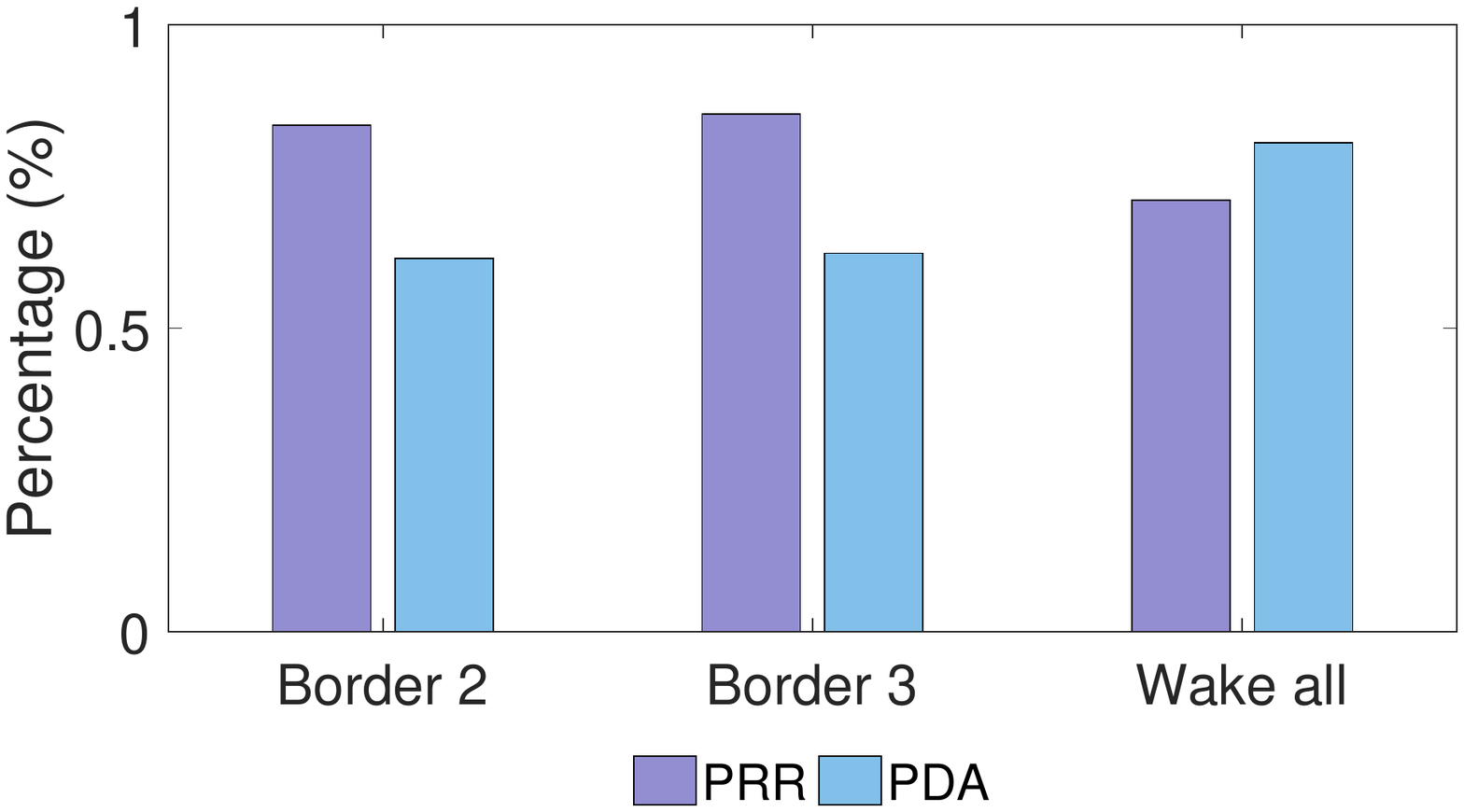}\label{fig:2wave_prr_accuracy}}

\subfigure[Localization error.]{\includegraphics[width=0.7\columnwidth]{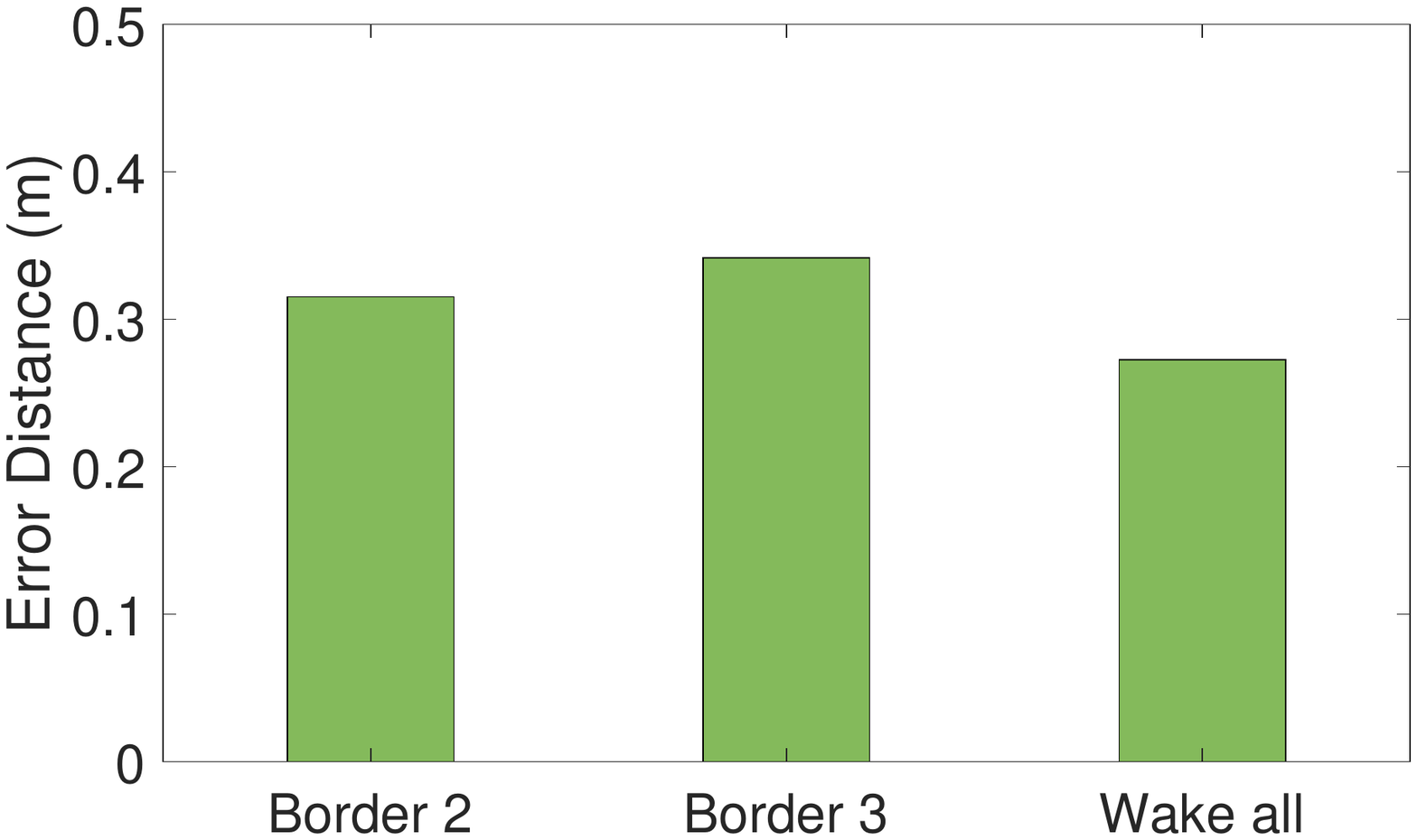}\label{fig:2wave_ed}}

\caption{Experiment results of {\name} using two wave beacons with different border {\anbe}s.}
\label{fig:two_wave_localization}
\end{figure}

\section{Future Work}
\label{sec:discussion}

Based on our experiment results, we conclude that the harvested RF energy is able to support beacon based applications in batteryless BLE devices. 
Meanwhile there exists some points worth of further improvement. 

\begin{enumerate}
\item {\name} does not consider the situations that affect the harvested RF power, such as obstacles between {\charbe}, {\anbe} and {\mono}. The system design assumes the devices are deployed in line-of-sight of each other. It is unknown how {\anbe}s should be categorized into two groups by range estimation approach or by two wave beacons approach, if the harvested energy of {\anbe}s are affected by nearby obstacles. We need an approach to cope with complicated energy radio environment. 

\item The performance of {\name} is unknown if multiple {\charbe}s are deployed. Firstly, the destructive radio from multiple RF energy transmitters will decrease the harvested energy of {\anbe}s and {\mono}s. Secondly, the passive wakeup signal from multiple {\charbe}s will trigger unnecessary wakeup of {\anbe}s.  To eliminate these effects, we need an approach to coordinate the operation of {\charbe}s. 

\item The scalability of collision based beacon in {\name} is unknown.  
Firstly, we only test the beacon communication with maximum 4 {\anbe}s. 
The testing results show that the PRR and PDA decrease as the number of {\anbe} increases. 
This means that collision based beacon approach should be further improved to cope with multiple beacon messages. 
A possible solution is using more efficient encoding and decoding process to reduce interference.  
Secondly, {\mono} needs to try all the orthogonal codes to decode the beacon message, which is energy-consuming. As the number of {\anbe} increases, the power consumption on decoding the orthogonal codes will become unbearable. 
We need a strategy to shrink the list of orthogonal codes. 
A possible solution is to divide the {\anbe} network into sub-areas. 
Instead of having an unique orthogonal code in the whole network, each {\anbe} has a pre-defined unique orthogonal code in each sub-area. 
Once a {\mono} is inside a sub-area, 
a {\charbe} sends the set of orthogonal codes of the sub-area to the {\mono}. 
\end{enumerate}

\section{Conclusion}
\label{sec:conclusion}

In this paper we presented a RF energy harvesting-enabled indoor BLE beacon system named as \textbf{\name}. 
The key innovations of {\name} are the four components to align the energy requirement of BLE communication to the limited budget of harvested energy, including: 
(i) leveraging collisions based beacon to extremely decrease energy used for receiving; 
(ii) using normal RF energy transmitting device to build a passive wakeup function, which is used for decreasing the power consumption on idle listening; 
(iii) waking up only the {\anbe}s near the {\mono} to send beacon messages by estimating the general range of the {\mono};
(iv) using two rounds (waves) of beacon to shift some workload from the {\anbe}s harvesting less energy to the ones harvesting more energy. 
We implement {\name} with these components using off-the-shelf BLE motes and RF harvesting devices. 
Based on the extensive indoor experiments, we show that {\name} is capable of providing continuous beacon services to mobile nodes based on limited harvested energy from dedicated RF transmitter. 
To the best of our knowledge, {\name} is the first indoor BLE beacon system powered by RF energy transmission. 

\bibliographystyle{IEEEtran}
\bibliography{IEEEabrv,wiploc_IEEE}

\end{document}